\newtheorem{theorem}{Theorem}
\newtheorem{proposition}{Proposition}
\begin{document}
%
\title{Non-stationary Resource Allocation Policies for Delay-constrained Video Streaming: Application to Video over Internet-of-Things-enabled Networks}
%
%
%

\author{Jie~Xu,~\IEEEmembership{Student Member,~IEEE},
        Yiannis Andrepoulos,~\IEEEmembership{Member,~IEEE},\\
        Yuanzhang~Xiao,~\IEEEmembership{Student Member,~IEEE},
        and~Mihaela~van~der~Schaar,~\IEEEmembership{Fellow,~IEEE}
\thanks{J. Xu, Y. Xiao and M. van der Schaar are with the Department of Electrical Engineering, University of California Los Angeles (UCLA), Los Angeles, CA, 90095, USA. Emails: jiexu@ucla.edu, yxiao@ee.ucla.edu, mihaela@ee.ucla.edu. Y. Andrepoulos is with Department of Electronic and Electrical Engineering, University College London (UCL), London, UK. Email: iandreop@ee.ucl.ac.uk. J. Xu, Y. Xiao and M. van der Schaar were supported by US NSF grant CNS-1016081. Y. Andrepoulos was partially supported by the UK EPSRC, EP/K033166/1.}}

\maketitle

\begin{abstract}
Due to the high bandwidth requirements and stringent delay constraints of multi-user wireless video transmission applications, ensuring that all video senders have sufficient transmission opportunities to use before their delay deadlines expire is a longstanding research problem. We propose a novel solution that addresses this problem without assuming detailed packet-level knowledge, which is unavailable at resource allocation time (i.e. prior to the actual compression and transmission). Instead, we translate the transmission delay deadlines of each sender's video packets into a monotonically-decreasing weight distribution within the considered time horizon. Higher weights are assigned to the slots that have higher probability for deadline-abiding delivery. Given the sets of weights of the senders' video streams, we propose the low-complexity Delay-Aware Resource Allocation (DARA) approach to compute the optimal slot allocation policy that maximizes the deadline-abiding delivery of all senders. A unique characteristic of the DARA approach is that it yields a non-stationary slot allocation policy that depends on the allocation of previous slots. This is in contrast with all existing slot allocation policies such as round-robin or rate-adaptive round-robin policies, which are stationary because the allocation of the current slot does not depend on the allocation of previous slots. We prove that the DARA approach is optimal for weight distributions that are exponentially decreasing in time. We further implement our framework for real-time video streaming in wireless personal area networks that are gaining significant traction within the new Internet-of-Things (IoT) paradigm. For multiple surveillance videos encoded with H.264/AVC and streamed via the 6tisch framework that simulates the IoT-oriented IEEE 802.15.4e TSCH medium access control, our solution is shown to be the only one that ensures all video bitstreams are delivered with acceptable quality in a deadline-abiding manner.
\end{abstract}
%
\begin{IEEEkeywords}
wireless video sensor networks, resource allocation, non-stationary policies, IEEE 802.15.4e,  Internet-of-Things
\end{IEEEkeywords}
%

%
\IEEEpeerreviewmaketitle

\vspace{-15pt}
\section{Introduction}
Multi-user wireless resource allocation for multiple video bitstream transmitters is a longstanding research problem \cite{vdS2005}\cite{Huang}-\cite{vdS2006}. One of the major challenges in such communications systems is how to allocate transmission resources (i.e. timeslots) in a manner that allows all video senders to send sufficient amount of packets prior to their deadline expiration and thus ensure sufficient video quality at the receiver side. This problem is becoming particularly pertinent today with the emergence of the Internet-of-Things (IoT) and machine-to-machine (M2M) communications that are expected to bring a disruptive change in the way people access real-time sensor data flows across the globe \cite{Watteyne}\cite{Wang}. This is because new standards such as the IEEE 802.15.4e at the medium access control (MAC) layer \cite{IEEE 802.15.4} (as finalized in 2012) and 6LoWPAN at the network layer \cite{Shelby} now pave the way for a unified IPv6-based network layer between wireless visual sensors and arbitrary destinations across the Internet (see Figure 1). Within such a paradigm the challenging part for delay-constrained video transmission is the IEEE 802.15.4e-enabled wireless personal area network (WPAN). Such WPANs are based on a central coordinator that also serves as the gateway to the broader Internet and is thus called the low power border router (LPBR) \cite{IEEE 802.15.4}. The functionality of the LPBR is two-fold \cite{Watteyne}\cite{Wang}\cite{IEEE 802.15.4}: firstly to coordinate the timeslots provided to each sensor during the network active time and secondly to aggregate all received streams and forward them to the destination IPv6 addresses \cite{Watteyne}\cite{Wang}\cite{IEEE 802.15.4}. Efficient allocation of timeslots to each visual sensor is crucial to the success of real-time video streaming due to its high bandwidth requirements and stringent delay constraints. Moreover, the allocation should be done without detailed knowledge about the distortion impact and delay deadlines of each video packet, which tends to be unavailable or overly-complex to obtain in real time.

\begin{figure}
\centerline{\includegraphics[angle = 270, scale = 0.25]{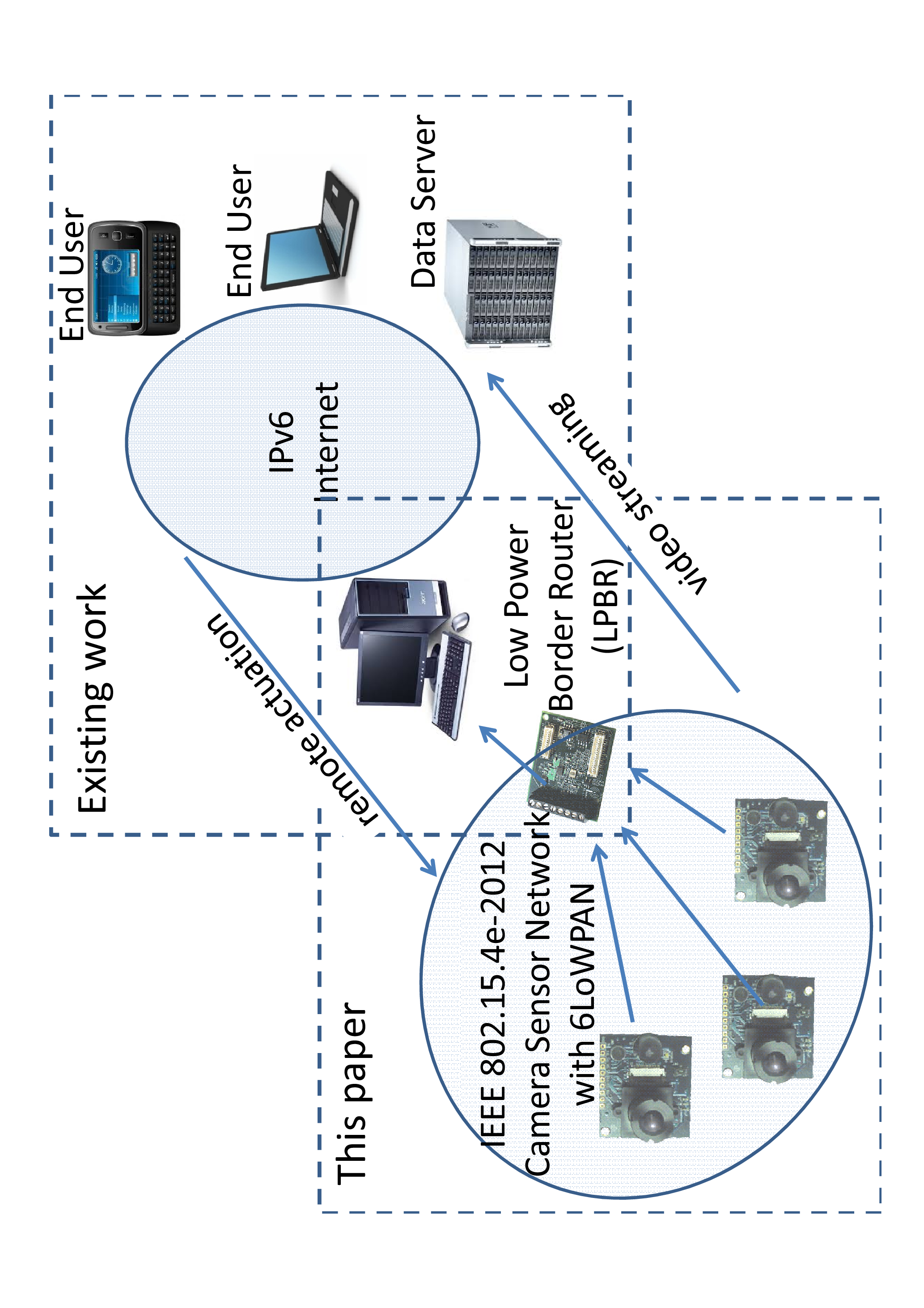}}
\caption{Video streaming over IoT-oriented standards.}\label{system}
\vspace{-20pt}
\end{figure}

In this paper, we propose for the first time an  optimal  multi-user resource allocation framework for delay-constrained video streaming, which is based solely on statistical information about the video users' characteristics, rigorously prove its optimality and show how it can be applied to multi-camera video resource allocation in IEEE 802.15.4e WPANs under the newly-standardized time synchronized channel hopping  (TSCH) MAC \cite{Watteyne}\cite{Wang}\cite{IEEE 802.15.4}, which allows for synchronization and contention-free and interference-mitigated transmission without the use of complex collision avoidance mechanisms \cite{IEEE 802.15.4}. Even though we validate our slot allocation solution for this particular setting, because it provides ways to coordinate users with delay preferences using only limited information, our proposal is applicable to many other resource allocation problems in many other settings, such as IEEE 802.11e WLANs, video streaming over 3G/LTE cellular networks, etc.

\vspace{-10pt}
\subsection{Problem Description}
The importance of timeslots (in expectation) is highly related to their position - the earlier the slot becomes available to a sensor, the more useful is it to its packet transmission since it provides more laxity for the deadline requirement. To illustrate this, Figure 2 presents deadline distributions calculated for four different video bitstreams within the same slotframe interval (which is set to one second). Within this interval, these distributions present the sizes of the video bitstream of each sensor with transmission deadline after the time marked in the horizontal axis\footnote{These results were generated from CIF-10Hz surveillance videos encoded with the H.264/AVC encoder under: high profile, low-delay IBBP encoding structure, transmission deadline set to 100 ms  and intra-frame period of 4 s.}. The upper-left graph (sensor 1) demonstrates a sharp peak at $t=0$  as it includes an intra (I) frame, while all other three graphs only include predicted (P) and bidirectionally predicted (B) frames within the specific slotframe interval. Evidently, under the strict transmission deadline of 100 ms, providing MAC-layer timeslots in a round-robin fashion to these four sensors will not be optimal as, for example, sensor 1 requires significantly more timeslots at the first 100 ms of the slotframe (to accommodate the I-frame deadline) than sensors 2$\sim$4 that have a much smoother deadline distribution. This indicates that using weights derived by such deadline distributions will automatically incorporate a measure of the expected quality of the received video bitstream corresponding to the slotframe interval under consideration.

\begin{figure}
\centerline{\includegraphics[angle = 270, scale = 0.35]{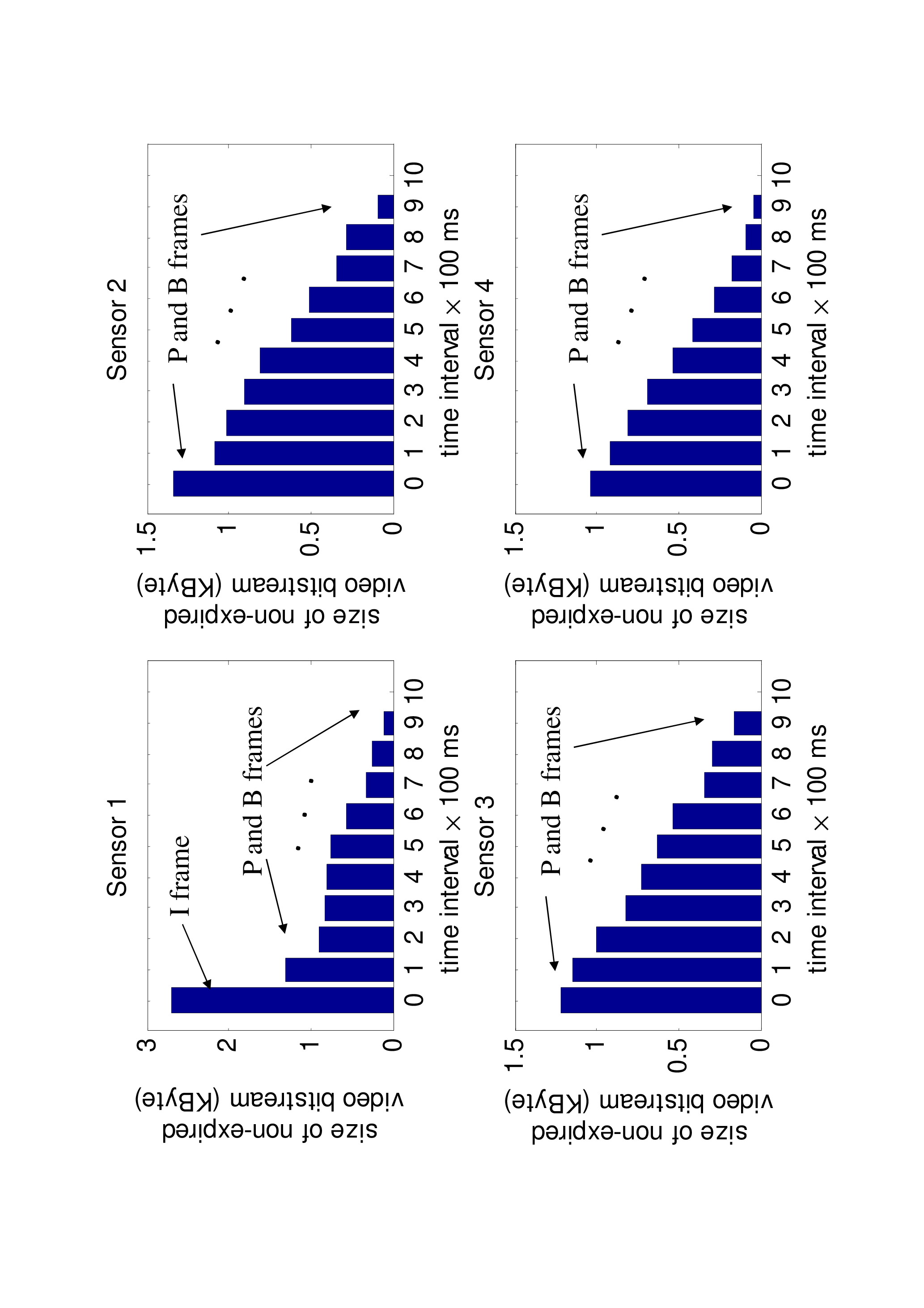}}
\caption{Distributions of four video bitstreams from four visual sensors presenting the size of each video bitstream within the same slotframe (1-second interval) that has transmission deadline after the marked time.}\label{histogram}
\vspace{-18pt}
\end{figure}

\vspace{-10pt}
\subsection{Contribution and Paper Organization}
In this paper we model the value of different MAC-layer timeslots to visual sensors by assigning weights to slots for each sensor, which represent the sensors' discounting of the value of future transmission opportunities. These weights depend on only some statistical information of the sensors' video content, video application requirements and video encoding/decoding techniques (e.g. the H.264/AVC temporal prediction structure) but not the specific packet-level information. Since earlier timeslots will satisfy sensors' delay requirements more easily, the only constraint is that weights are monotonically decreasing in time. Given the weights of each sensor, we propose an efficient approach to perform MAC-layer resource allocation for multi-camera WPANs, called {\it Delay-Aware Resource Allocation} (DARA).  The DARA approach assigns each sensor with a single index that captures three important aspects for its resource (i.e. timeslot) allocation: (i) the distance from the target timeslot allocation; (ii) the benefit of allocating a timeslot to a sensor; and (iii) the discounted sum of remaining transmission opportunities for a sensor in the same slotframe. It allocates the current timeslot to the sensor with the largest current index and then updates each sensor's slot indices. Hence, a sensor with: (i) larger distance from the target; (ii) a larger current benefit and (iii) fewer discounted remaining transmission opportunities, is more likely to be assigned with the current timeslot. For exponentially-decreasing weight distributions that were found to characterize well the observed deadline distributions of real video traces, we prove that our approach achieves the optimal performance. We also show via numerical experiments and real-world video streaming over the 6tisch simulator of the IEEE 802.15.4e TSCH MAC \cite{Watteyne}\cite{Wang} that our approach significantly outperforms existing solutions.

A unique characteristic of the proposed DARA approach is that it yields a non-stationary slot allocation policy. We define the notion of a (non-)stationary allocation below.

{\it Definition.} A deterministic or probabilistic slot allocation process is stationary if each slot assignment depends only on the available (finite) set of selection states and the user preferences and does not depend on time. Otherwise, a slot allocation is non-stationary.

{\it Examples}: Round-robin allocation is stationary and so is random slot allocation with fixed probability for each slot. A weighted round-robin allocation, with the weights depending on the user preferences (e.g. average bitrates and delay deadlines) but not on time, is also stationary. A random slot allocation with each slot having time-varying probability of being assigned to a video sender is non-stationary.

As we shall see, under an appropriate optimization framework, non-stationary allocations can better adapt to the different users' resource requirements over time and hence, yield a much better performance \cite{Xiao}. We discuss this in more detail in the following subsection.

\vspace{-10pt}
\subsection{Review of State-of-the-art in Video Streaming over Wireless Networks}
Existing wireless video streaming solutions that could potentially fit into IEEE 802.15.4e-enabled networks can be broadly divided into three categories. The first category encompasses single-user video streaming solutions, focusing on packet scheduling, error protection or cross-layer adaptation in order to maximize the received video quality \cite{vdS2005}-\cite{Liang}. Such proposals assume exact packet-level information (e.g. packet distortion impact and packet-level transmission deadline) and propose highly-complex rate-distortion optimized packet scheduling solutions. Since only single-user transmission scenarios are considered, such proposals are useful after the multi-user resource (i.e. slot) allocation problem is addressed and can easily interoperate with our proposed slot allocation solution.

The second category comprises multi-user video streaming, emphasizing on resource allocation amongst multiple users (i.e. visual sensors in our case) simultaneously transmitting video and sharing the same wireless resources \cite{Huang}-\cite{Vukadinovic}. The network utility maximization (NUM) framework and optimization-based algorithms have been widely studied for solving wireless multi-user resource allocation problems in the past few years \cite{Eryilmaz}-\cite{Joe-Wong}. However, these models do not take into account the different delay preferences of difference users over time in the video streaming problem. Hence, the resulting slot allocations are stationary under our definition. Detailed packet-level information of the video bitstream and channel state information are incorporated in the optimized resource allocation solution. In Huang et al \cite{Huang}, a NUM problem is first solved for multi-user rate allocation based on which time slots are assigned to users according to the deadlines of their packets. The optimal multi-user delay-constrained wireless video transmission framework is formulated as a multi-user Markov decision process (MUMDP) \cite{Fu2010}. The MUMDP then is decomposed into local MDPs which can be autonomously solved by individual users. Even though such proposals provide for joint throughput and delay based optimization, detailed knowledge of the video packet contents and their corresponding distortion impact is required to perform the resource allocation. However, such solutions are unsuitable within IoT-enabled networks of sensors, where it is extremely unlikely that: (i) either the sensors or the LPBR will have access to the individual packet distortion estimates and delay requirements; (ii) either system will have the resources necessary to process and derive an optimal transmission allocation based on such information.

The third category of research works proposes slot allocation policies for multi-user video transmission without packet-level knowledge and is instead based on knowledge of only some statistics, such as the histogram of the delay deadlines and/or the histogram of bitstream element sizes present within the resource allocation period of the video streams. Examples of such policies are the round-robin policy (e.g. weighted-round-robin or ``water-filling'' strategies). For instance, the sensors can be set to transmit in some predetermined order, but, within each slot allocation block, each sensor may be allocated a number of slots that is proportional to its rate or delay sensitivity. In Dutta et al  \cite{Dutta}, a greedy policy is proposed to fairly allocate transmission slots amongst multiple variable-bitrate streaming videos in order to maximize the minimum ``playout lead'' (i.e. duration of time the video can be played using only the data already buffered in its client) across all videos. When the video streams are compressed using constant-bitrate encoding, the resulting policy essentially reduces to a weighted round-robin policy with weights corresponding to the video bitrate. In Pradas and Vazquez-Castro \cite{Pradas}, a NUM-based framework is developed to balance rate-delay performance for video multicasting over adaptive satellite networks. Video streams are classified into, so-called, Classes of Services (CoS) with different delay requirements. The NUM framework then incorporates these requirements by solving a weighted sum utility maximization problem (where the weights are based on the delay requirements). After the video transmission rates are determined, a weighted-round-robin policy is performed to allocated slots to different video streams depending on their CoS. Though not exactly the same because of the different deployment under consideration, this policy is analogous to a weighted-round-robin policy is which the weights are determined based on both the rate and the delay requirements.

Because our proposal falls into this category of approaches that do not require packet-level information, and to better illustrate the differences with prior works, we provide an intuitive comparison of our resource allocation with two benchmark policies (which we call "R-Round-Robin" and "RD-Round-Robin") that encapsulate the merits of Dutta et al \cite{Dutta} and Pradas and Vazquez-Castro \cite{Pradas}, respectively. It is important to note that these policies are stationary because the allocation of the current slot does not depend on, and adapt to, the allocation of previous slots within each slot allocation block. As we shall see, we can significantly improve upon the performance of these stationary polices by using non-stationary policies that take into account the allocation of previous slots \cite{Xiao}.

To better illustrate the difference of these policies, Table I shows an example slot allocation for 3 sensors A, B and C within one slotframe comprising 12 timeslots. The target rate of sensor A is three times of that of sensor C. The target rate of sensor B is twice of that of sensor C. Sensor B is the most delay sensitive while sensor C is the least delay sensitive. The allocations of all three benchmark policies result in cyclic repetitions within the slotframe, while in the proposed DARA policy the allocation depends on past allocations and is non-cyclic. Table II further presents a summative comparison of our approach with the state-of-the-art on multi-user video streaming.

\begin{table}
\caption{An illustrative example of the resulting allocation of different policies within one slotframe.}\label{table1}
\vspace{-10pt}
\centerline{\includegraphics[scale = 0.78]{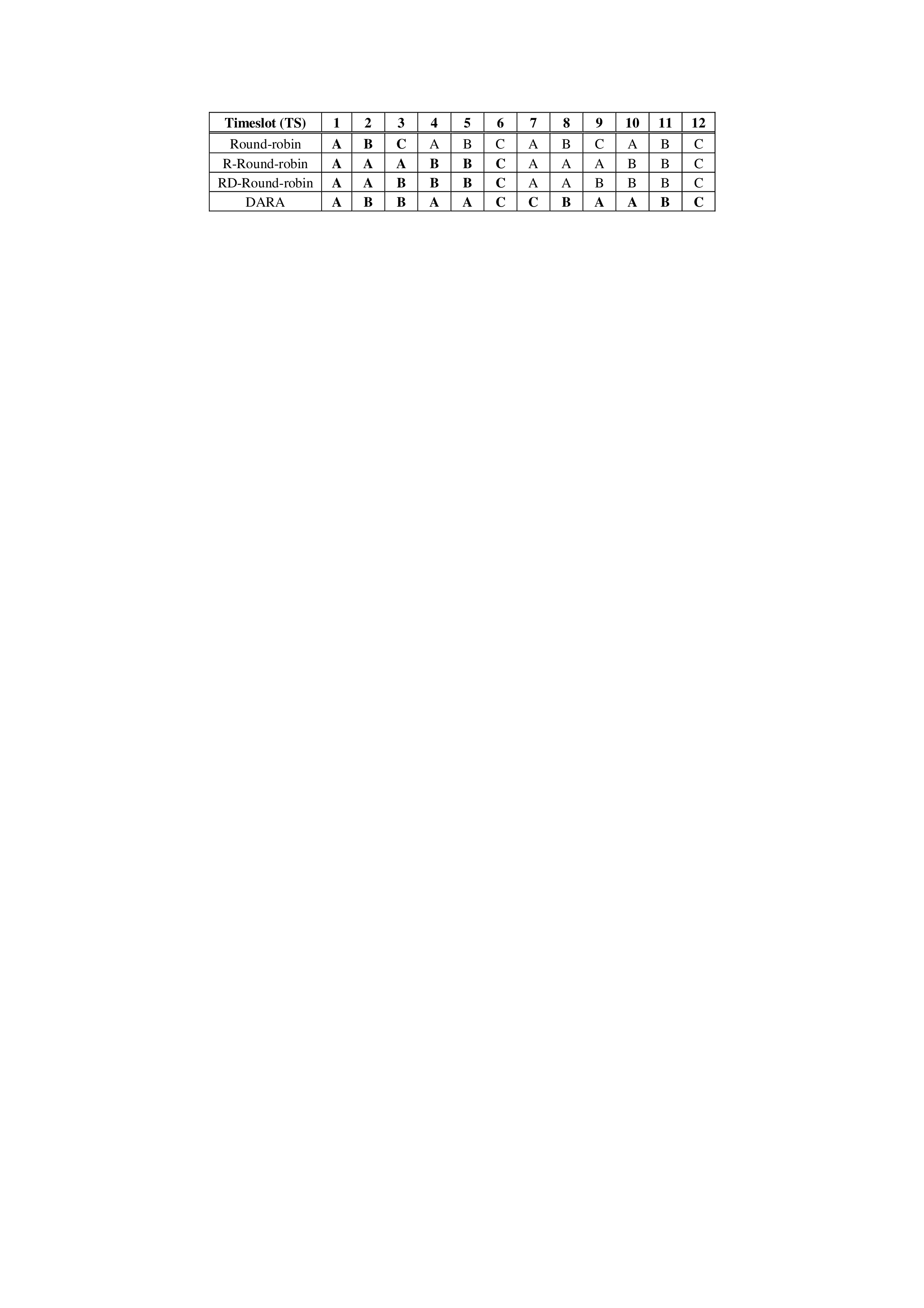}}
\vspace{-10pt}
\end{table}

\begin{table}
\caption{Comparison of different policies.}\label{table2}
\vspace{-10pt}
\centerline{\includegraphics[scale = 0.75]{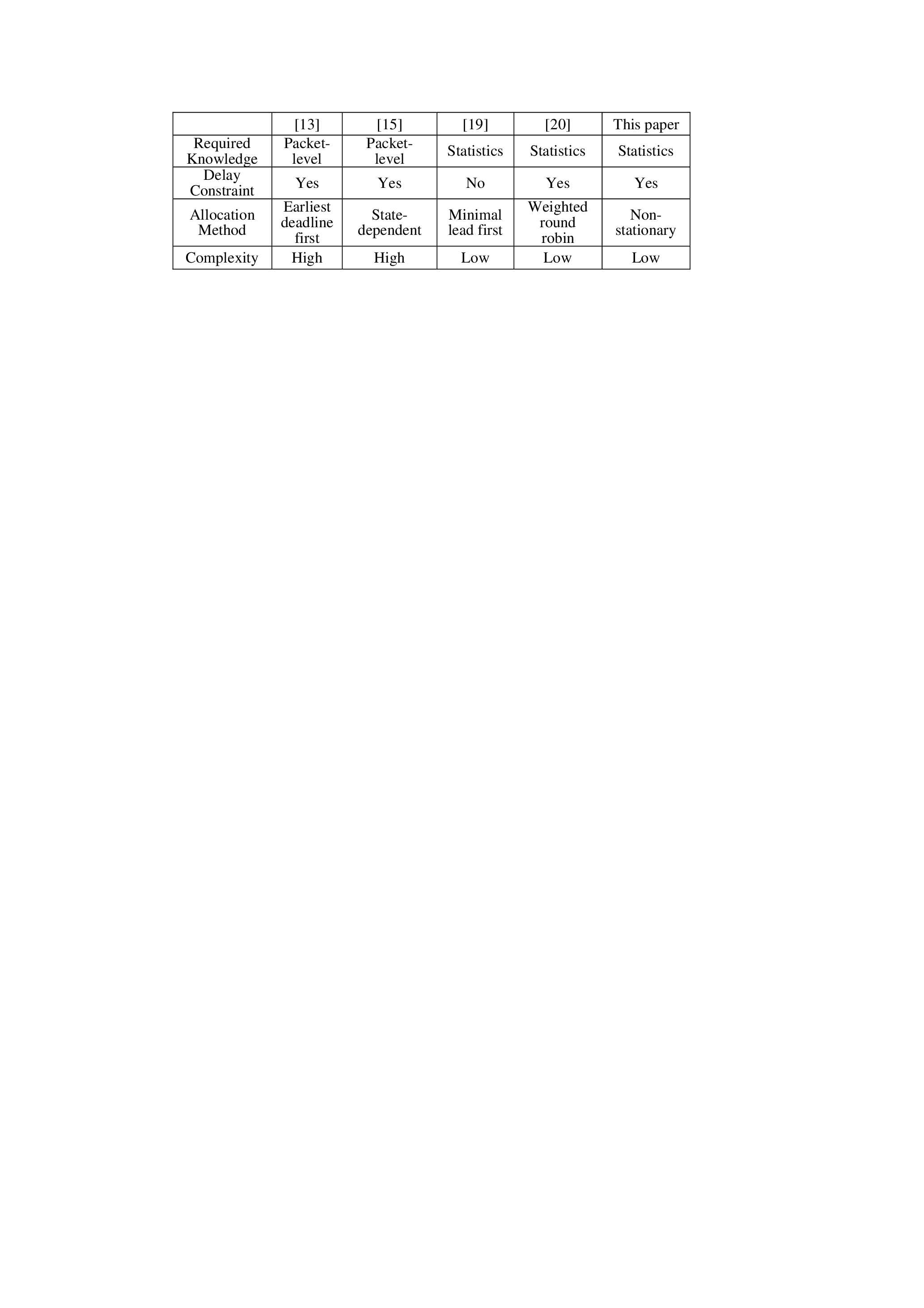}}
\vspace{-10pt}
\end{table}

The remainder of this paper is organized as follows: Section II describes the system model and formulates the resource allocation problem with incomplete information. Section III proposes the DARA approach for resource allocation. Section IV derives analytic bounds that characterize the performance of the DARA approach. Section V provides the numerical as well as experimental results using real video sequences on a real deployment. Section VI concludes the paper.

\vspace{-10pt}
\section{System Model}
We consider an LPBR under the TSCH mode of IEEE 802.15.4e MAC with $N$ visual sensors sharing the TSCH slotframe for their video bitstream transmission. Since we cannot control the end-to-end delay over the IoT scenario of Figure 1, we consider instead a deadline set for each video bitstream part produced by each visual sensor. This transmission deadline is in the order of 100 ms. As shown in Figure 1, all sensors are directly connected to the LPBR that relays their video bitstreams to the end users using well-established streaming over TCP/IP or UDP/IP \cite{Stockhammer}\cite{Tsakos}.  Table III provides a table of the key notations used in this paper.

\begin{table}
\caption{Nomenclature table.}\label{table3}
\vspace{-10pt}
\centerline{\includegraphics[scale = 0.8]{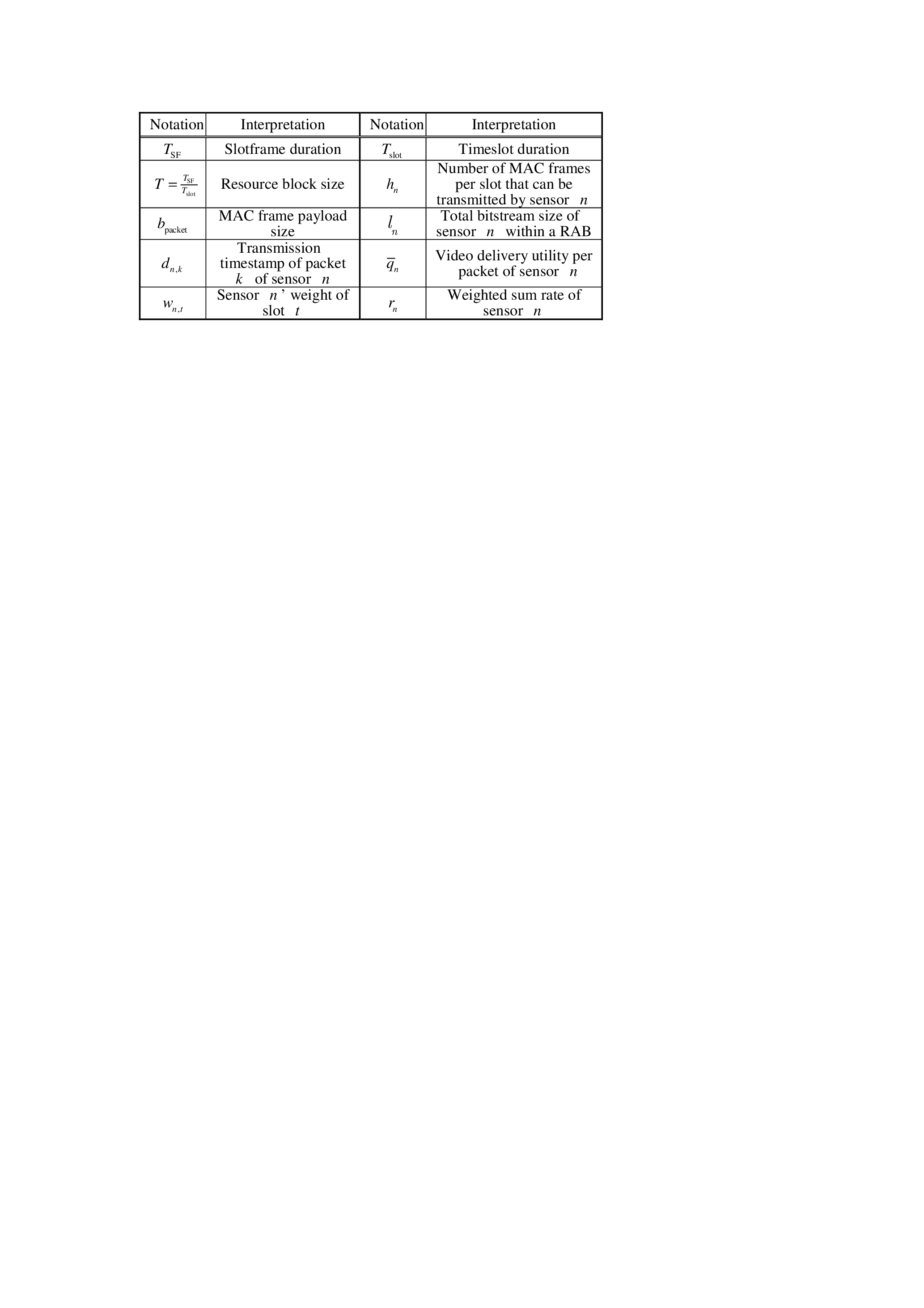}}
\vspace{-15pt}
\end{table}

\vspace{-10pt}
\subsection{Wireless System Abstraction}
Under the TSCH mode, within each slotframe interval of $T_{\textrm{SF}}$ s, each sensor is allocated one or more timeslots (and a corresponding channel) for each of its transmission opportunities. Thus, within each $T_{\textrm{SF}}$ s, each sensor will hop to one or more of the 16 channels available in the IEEE 802.15.4 PHY, albeit for very brief intervals of time \cite{Watteyne}\cite{Wang}\cite{IEEE 802.15.4}. Under TSCH, the transmission medium coherence time experienced at the MAC layer is substantially longer than $T_{\textrm{SF}}$ \cite{Watteyne}\cite{Wang}\cite{Bachir}\cite{Pister}. That is, the error rate at the MAC layer can be regarded as constant over any time interval smaller or equal to $T_{\textrm{SF}}$. In fact, due to TSCH's properties this assumption has been shown to be valid even for substantially prolonged periods of time if the sensors and LPBR are not moving rapidly \cite{Bachir}\cite{Pister}. Moreover, due to strong channel coding schemes employed at the IEEE 802.15.4 PHY, less than 1\% packet loss is observed at the MAC layer under typical operational conditions \cite{Pister}. Consequently, in the following we focus on the resource allocation performed in groups of $T=\frac{T_{\textrm{SF}}}{T_{\textrm{slot}}}$ timeslots, with $T_{\textrm{slot}}$  being the duration of one timeslot, and also assume very limited or no packet retransmissions at the MAC layer. These $T$ timeslots form a {\it resource allocation block} (RAB). The number of MAC frames that can be transmitted by sensor $n$, $1\leq n \leq N$, within each timeslot of a RAB is denoted by $h_n$ and we assume that each MAC frame carries a payload of $b_{\textrm{packet}}$ bytes. Typical values for these parameters in TSCH MAC are \cite{Watteyne}\cite{Wang}\cite{Bachir}\cite{Pister}: $T_{\textrm{SF}}\in [0.2, 2]$ s, $T_{\textrm{slot}} \in [6, 10]$ ms, $h_n \in \{1,2,3\}$, $b_{\textrm{packet}} \in [45, 110]$ bytes.

\vspace{-10pt}
\subsection{Video Coding Adaptivity and Transmission Abstraction}
During each RAB, each sensor $n$  needs to send the bitstream corresponding to several independently-decodable video bitstream units\footnote{i.e. bitstreams corresponding to $\{I, P, B\}$  frames that can be decoded if they are received in their entirety, with the application of concealment - if necessary.}. We denote the total bitstream size (bytes) of each sensor within one RAB by $l_n$. Any MPEG video coder can be used to generate the video bitstream based on any standard $\{I, P, B\}$ temporal prediction structure with a fixed transmission deadline set for each compressed  $\{I, P, B\}$ frame once it is produced by the encoder. This transmission deadline is imposed either by the limited on-board memory of each sensor (e.g. when only limited number of compressed frames can be buffered), or due to the stringent delay constraints imposed by the application context (e.g. in real-time video surveillance or monitoring). Each sensor adjusts its number of packets by discarding independently-decodable bitstream parts that have expired. For example, within an MP4 or MKV encapsulation of H.264/AVC video \cite{Tsakos}, it is straightforward to discard NAL units or SimpleBlocks (respectively) with expired transmission deadlines and a partially received bitstream can be reliably decoded with the FFmpeg library, which will also apply error concealment before displaying the decoded video.

Within each RAB, each packet $k\in \{1,...,\frac{l_n}{b_{\textrm{packet}}}\}$  of sensor $n$  bears a transmission time stamp (TTS) $d_{n,k}$  (measured in number of timeslots with respect to the beginning of the current RAB). The packet is valid for transmission if and only if it is sent by the sensor before (or by) slot $d_{n,k}$. Finally, each packet of sensor $n$  transmitted until slot $d_{n,k}$ induces an expected {\it video delivery utility} $\bar{q}_n$, which is a measure of the average improvement of video quality if this packet is used. We remark that, while distortion estimates can be used for $\bar{q}_n$ following previous work \cite{Chou}\cite{vdS2007}, in this paper we define utilities based on the expected deadline and expected video delivery utility of groups packets, as this is information that can be reliably (and quickly) estimated by the video encoder prior to the actual production of video packets.

\vspace{-10pt}
\subsection{Problem Formulation}
Since the LPBR has no {\it a-priori} knowledge of the characteristics of the generated video bitstream of each sensor (i.e. no knowledge of $\forall n, k: d_{n,k}$), it assigns the slots of each RAB to sensors solely based on limited statistical information about each video bitstream. This information corresponds to each sensor's weight distribution and the expected video delivery utility $\bar{q}_n$, which is transmitted from each sensor to the LPBR periodically with very low overhead (only two MAC layer packets are needed).

Let $\bold{s}$  be the current RAB vector, with elements $s(t)\in\{1,2,...,N\}$ representing the sensor to which timeslot $t\in \{1,..., T\}$ is allocated. If the resource allocation incorporates the entire bitstream of sensor $n$  without any transmission deadline violation, then the number of MAC frames transmitted is $h_n \sum\limits_{t=1}^T \textbf{1}(s(t) = 1)$, where $\textbf{1}(\cdot)$  is the indicator function. In such a case, we define the utility of the sensor $n$ given the RAB vector $\bold{s}$ by $Q_n(\bold{s}) = \bar{q}_n h_n \sum\limits_{t=1}^T \textbf{1}(s(t)=n)$ which gives a measure of the expected video quality  for sensor $n$  in the current RAB \cite{Chou}\cite{Liang}. Because in the actual deployment the resource allocation is performed before the video packets are generated, the utility is not considering the visual significance of each packet per-se but is computed based on the expected video delivery utility, $\bold{q}_n$, and the amount of MAC frames transmitted. However, it is possible that some packets are not able to meet their TTS deadline if the timeslots for sensor $n$  come too late. In a multi-camera sensor network, this problem becomes more complex since different sensors have different TTS deadlines for their packets. In this paper, we model the different delay sensitivities amongst the $N$  sensors by the sensors' discounting of the value of upcoming timeslots within the slotframe. Let $\bold{w}_n$  be the weight vector of size $T$  for sensor $n$, where $w_{n,t}\in [0,1]$ is $n^{\textrm{th}}$ sensor's weight of timeslot $t$. We normalize the weight vector by setting $w_{n,1} = 1$ and, by definition: $\forall n, t: w_{n,t} \geq w_{n, t+1}$, since the sensor's valuation of transmission opportunities is decreasing with time as packets begin to expire. Examples of $\bold{w}_n$ for four different sensors are given in Figure 2. The expected discounted utility of sensor $n$  is then given by:
\begin{align}
Q_n(\bold{s}, \bold{w}_n) = \bar{q}_n h_n \sum\limits_{t=1}^T w_{n,t} \textbf{1}(s(t) = n)\label{eq2}
\end{align}
The goal of the LPBR is to maximize an objective function of the sensors' discounted utilities  $W(Q_1(\bold{s}, \bold{w}_1), ..., Q_N(\bold{s}, \bold{w}_N))$. This definition of the objective function $W$ is general enough to include the objective functions deployed in many existing works. For example, one can use the weighted sum of utilities of all sensors: $
W(Q_1(\bold{s}, \bold{w}_1), ..., Q_N(\bold{s}, \bold{w}_N)) = \sum\limits_{n=1}^N \alpha_n Q_n(\bold{s}, \bold{w}_n)$ where $\{\alpha_n\}_{n\in\{1,..., N\}}$ are weights satisfying $\alpha_n \in [0,1]$ and $\sum\limits_{n=1}^N \alpha_n = 1$. The resource allocation problem for the LPBR can then be formally defined as the derivation of the optimal RAB vector $\bold{s}^*$  that maximizes the objective function of the sensor's utilities:
\begin{equation}
\begin{array}{l}
\bold{s}^* = \arg\max\limits_{\bold{s}} \{W(Q_1(\bold{s}, \bold{w}_1), ..., Q_N(\bold{s}, \bold{w}_N))\}\\
\textrm{subject to}~~ \|\bold{s}\|_0 \leq T
\end{array}\label{eq4}
\end{equation}
where $\|\bold{s}\|_0$ represents the number of non-zero elements in vector $\bold{s}$. Since there are $N^T$ possible RAB vectors $\bold{s}$, exhaustive search for the solution to (\ref{eq4}) is clearly very complex even for modest values for $N$ and $T$. Thus, we try to solve this problem in an alternative way. From (\ref{eq2}) we can see that the slot allocation $\bold{s}$ influences the  $n^{\textrm{th}}$ sensor's utility through term $\sum\limits_{t=1}^T w_{n,t} \textbf{1}(s(t)=n)$. We define this term as the {\it weighted sum rate} and write
\begin{align}
r_n = \sum\limits_{t=1}^T w_{n,t} \textbf{1}(s(t)=n)\label{eq5}
\end{align}
The weighted sum rate can be interpreted as the expected number of packets that can be sent before their corresponding TTS deadlines expire during each RAB. Thus, the $n^{\textrm{th}}$ sensor's utility can be represented by $Q_n(r_n) = \bar{q}_n h_n r_n$. Our proposed approach requires determining the optimal weighted sum rate vector $\bold{r}^*$ first and then finding a resource allocation vector $\bold{s}^*$ that achieves $\bold{r}^*$ within the RAB.

\vspace{-10pt}
\section{Delay-Aware Resource Allocation}
In this section, we provide the slot allocation solution for multi-camera video transmissions with limited information. The proposed Delay-Aware Resource Allocation (DARA) approach comprises two steps. The first step determines the weighted sum rate allocation $\bold{r}$  using the deadline distributions $\bold{w}_n$ and the expected video delivery utility $\bar{q}_n$ of the video bitsteams. In the second step, the slot allocation $\bold{s}$ is determined using the DARA algorithm\footnote{Note that we use ``DARA approach'' to refer to the whole approach and ``DARA algorithm'' to refer to the algorithm in the second step of the DARA approach.} in order to achieve the rate allocation $\bold{r}$. Because the video characteristics (i.e. deadline distributions) are changing over time, the slot allocation is done for each RAB. We note that deadlines and delays do not need to be aligned with the RAB duration since the importance of slots is not affected - sending packets in early slots will satisfy the deadline requirement better than in later slots.

Note that each RAB vector $\bold{s}$  corresponds to a weighted sum rate allocation vector $\bold{r}$ according to (\ref{eq5}). However, the inverse is not true - it can be the case that there is no RAB vector $\bold{s}$ can achieve a given $\bold{r}$. Hence, only certain values of the weighted sum rate can be achieved by the slot allocation, depending on the discounting weights. We write $\mathcal{B}(\{\bold{w}_1,...,\bold{w}_N\}; T)$ as the set of achievable weighted sum rate vectors given weights $\bold{w}_1,...,\bold{w}_N$ and a RAB with $T$ slots.

\vspace{-10pt}
\subsection{Weighted Sum Rate Allocation}
If all sensors are not delay-sensitive, i.e. $\forall n, t: w_{n,t} = 1$, then the slot allocation problem is easy since it reduces to a simple weighted sum rate allocation problem.
\begin{equation}
\begin{array}{l}
\bold{r}^* = \arg\max\limits_{\bold{r}}\{W(Q_1(r_1),...,Q_N(r_N))\}\\
\textrm{subject to}~~~\bold{r}\in \mathcal{B}(\{\bold{w}_1,...,\bold{w}_N;T\})
\end{array}\label{eq7}
\end{equation}
where $\bold{r}\in \mathcal{B}(\{\bold{w}_1,...,\bold{w}_N;T\})$  simply means $\sum\limits_{n=1}^N r_n = T$ and $\forall n: r_n \in \{0,1,...,T\}$.

Since slots have equal value to sensors, it does not matter which specific transmission opportunities are allocated to them within a RAB. Therefore, once the optimal weighted sum rate allocation vector $\bold{r}^*$ is determined, an optimal RAB vector $\bold{s}^*$  (which is obviously not unique) can easily be determined (e.g. using weighted-round-robin schemes or "water-filling" strategies). However, if video transmissions are delay-sensitive, the positions of slots will have different impacts on the expected utility of each sensor's bitstream since the early transmission opportunities can satisfy the deadline requirements more easily. Therefore, delay sensitivity makes the resource allocation problem significantly more difficult.

Since in most scenarios (\ref{eq7}) is a convex optimization, various efficient optimization methods \cite{Boyd} can be used to solve it to obtain the weighted sum rate allocation vector. However, even if we have determined the optimal weighted sum rate allocation vector $\bold{r}^* \in \mathcal{B}(\{\bold{w}_1,...,\bold{w}_N\};T)$, it is unclear which RAB vector $\bold{s}^*$ corresponds to the optimal utility allocation (i.e. with respect to distortion reduction). To address this, in the next subsection we propose an efficient algorithm to compute the corresponding RAB vector $\bold{s}$ given the weighted sum rate allocation vector $\bold{r} \in \mathcal{B}(\{\bold{w}_1,...,\bold{w}_N\};T)$  determined in the first step.

\vspace{-10pt}
\subsection{Delay-Aware Resource Allocation Algorithm}
Given a weighted sum rate allocation vector $\bold{r}^* \in \mathcal{B}(\{\bold{w}_1,...,\bold{w}_N\};T)$, we want to find a RAB vector $\bold{s}$ that achieves $\bold{r}^*$. This is achieved by the DARA algorithm, presented in Algorithm 1.

In slot $t$, the sensor $n^*$, $1\leq n^* \leq N$, with the largest value of the resource allocation metric, $f^\mu_{n^*} w^\nu_{n^*,t}(\sum\limits_{\tau = t + 1}^T w_{n^*,\tau})^{-\gamma}$,  will be given the transmission opportunity of the current slot, where $\mu, \nu, \gamma \in \mathbb{R}$ are algorithm parameters to trade-off the three components of the metric:
\begin{itemize}
  \item $f_n$ stands for the distance to the target weighted sum rate allocation of sensor $n$  from the current allocated weighted sum rate; sensors with larger $f_n$ should be given the higher priority of transmission since their demands are larger;
  \item  $w_{n,t}$ accounts for the benefit of allocating the current timeslot to sensor $n$; sensors with larger $w_{n,t}$  should be given the higher priority of transmission since these sensors can increase their video quality more if they transmit in the current slot;
  \item $\sum\limits_{\tau=t+1}^T w_{n,\tau}$ represents the urgency of allocating the current slot to sensor $n$; sensors with lower $\sum\limits_{\tau=t+1}^T w_{n,\tau}$  should be given the higher priority of transmission since it becomes more difficult to satisfy their demands in the future.
\end{itemize}
The choices of $\mu,\nu,\gamma\in\mathbb{R}$  will depend on specific deployment scenarios. In general,  $f_n$ becomes more important with larger $\mu$, $w_{n,t}$  becomes more important with larger $\nu$  and $\sum\limits_{\tau=t+1}^T w_{n,\tau}$ becomes important with smaller $\gamma$.

\begin{figure}
\centerline{\includegraphics[scale = 0.8]{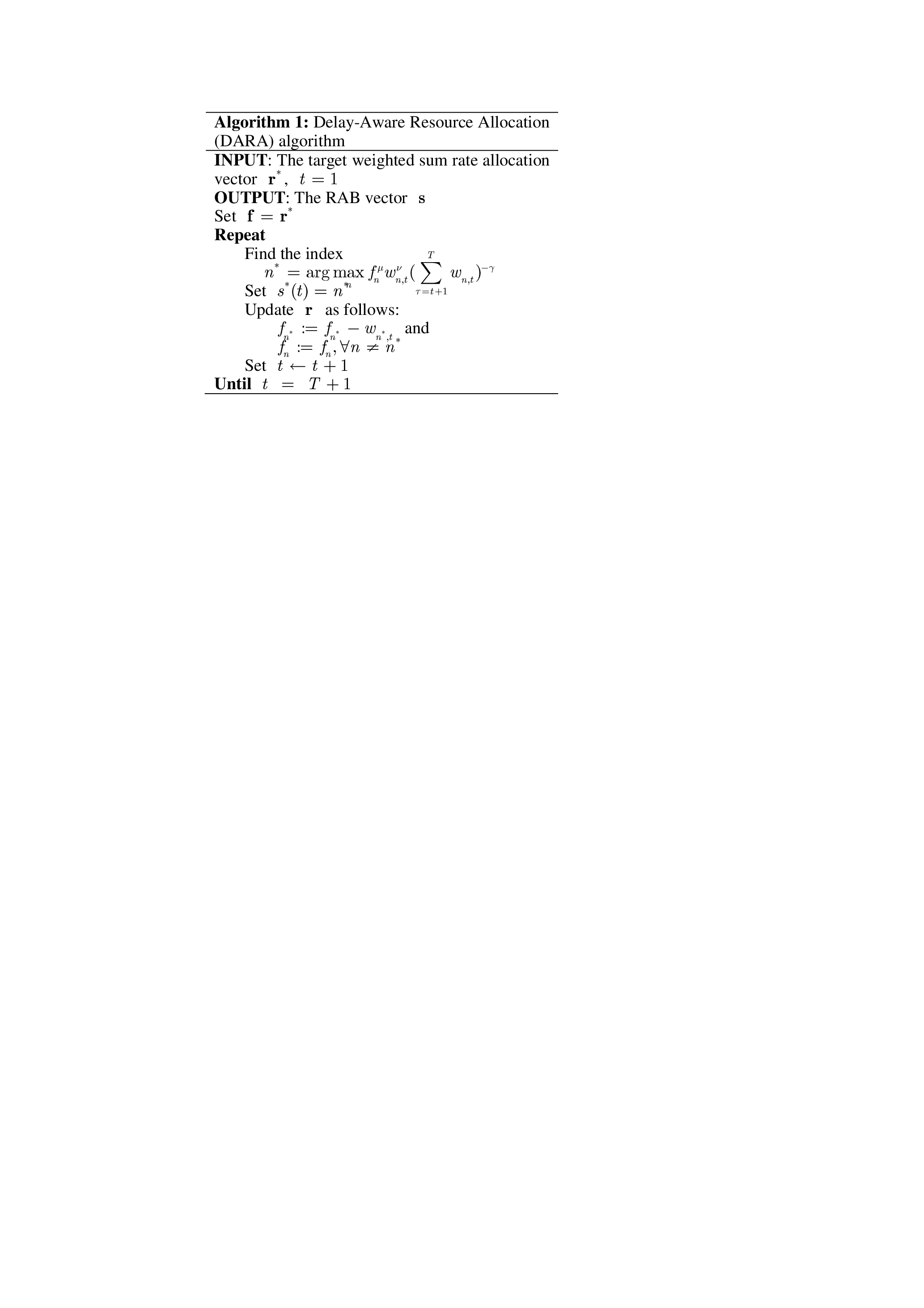}}
\label{algorithm}
\vspace{-20pt}
\end{figure}

The proposed DARA algorithm requires only statistical information of sensors' video transmissions (i.e. the discounting weights) and is simple and easy to implement in practical systems. In the following section, we provide the analytic characterization of the performance of the proposed DARA approach. In particular, we will prove that the proposed DARA approach achieves the optimal weighted sum rate allocation for the exponentially decreasing weights in infinite time horizon and show the performance gap is bounded for finite time horizon.

\vspace{-10pt}
\section{Analytic Characterization of Non-stationary Policies of DARA}
We characterize the performance of the DARA approach in various deployment scenarios. Our analysis focuses on exponentially-decreasing weight distributions\footnote{Our goodness-of-fit tests show that, over a large ensemble of deadline distributions obtained from real video streams (e.g. Figure 2), the exponential distribution achieves the best log-likelihood ratio in comparison to the Half-Normal distribution and the Generalized Pareto distribution.}, that is, for each sensor $n$, there exists a discount factor $\delta_n \in [0,1)$ such that $w_{n,t} = \delta^{t-1}_n$. Since the focus is on the weighted sum rate vector sets, we simplify the notation and write the set of achievable rates as $\mathcal{B}(\{\delta_1,...,\delta_N\};T)$.

\vspace{-10pt}
\subsection{Identical Discount Factors and Infinite Horizon}
Here we assume that sensors have an identical discount factor and RAB duration $T\to \infty$. Under this assumption, we study the set of achievable weighted sum rate vectors and prove the optimality of the DARA approach. Note that in this case, the second and third components of the allocation metric do not affect the resulting resource allocation since their values of all sensors are the same.

The first step to solve the slot allocation problem is to characterize the set of achievable weighted sum rate allocation vectors such that we can determine the optimal $\bold{r}^*$ by solving the optimization problem (\ref{eq7}). When sensors have an identical discount factor, the sum of sensors' weighted sum rate satisfies:
\begin{align}
\sum\limits_{n=1}^N r_n = \sum\limits_{t=1}^\infty \delta^{t-1} = \frac{1}{1-\delta}\label{eq8}
\end{align}
The above condition is necessary but not sufficient for the set of achievable weighted sum rate vectors. For example, suppose sensors are extremely delay-sensitive (i.e. $\delta = 0$), then the set of achievable weighted sum rate vectors includes only the $N$  vectors $\{\forall n \in \{1,...,N\}\}:\bold{r} = (0,...,r_n,...,0)$  where $r_n = 1$. If the condition is also sufficient, then the set of achievable rates is maximized. The following theorem determines when (\ref{eq8}) is also sufficient.

\begin{theorem}
Suppose the weight distributions are exponentially-decreasing, sensors have an identical discount factor $\forall n: \delta_n = \delta$  and the RAB duration $T \to \infty$. We can achieve the following set of weighted sum rate vectors
$\mathcal{B}(\{\delta,...,\delta\};\infty) = \{\bold{r}:\sum\limits_{n=1}^N r_n = \frac{1}{1-\delta}\}$
if and only if the discount factor $\delta \geq 1 - 1/N$.
\end{theorem}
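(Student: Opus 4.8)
The plan is to prove the two implications separately: sufficiency of $\delta \ge 1-1/N$ via the greedy specialization of the DARA algorithm, and necessity via a one-slot counting argument. Write $\Sigma = \frac{1}{1-\delta} = \sum_{t=1}^\infty \delta^{t-1}$. By \eqref{eq8}, \emph{every} allocation $\mathbf{s}$ automatically satisfies $\sum_n r_n = \Sigma$, so the whole question is which nonnegative vectors on the simplex $\{\mathbf{r}\ge 0:\sum_n r_n=\Sigma\}$ are realized as $r_n = \sum_{t:\,s(t)=n}\delta^{t-1}$.

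For the ``if'' direction I would fix an arbitrary target $\mathbf{r}$ on this simplex and run the DARA algorithm toward it. Since all sensors share the same $\delta$, the factors $w_{n,t}^\nu$ and $\big(\sum_{\tau>t}w_{n,\tau}\big)^{-\gamma}$ of the allocation metric are equal across sensors, so DARA simply assigns slot $t$ to a sensor $n^*$ maximizing the residual demand $f_n^{(t)} = r_n - \sum_{\tau<t,\,s(\tau)=n}\delta^{\tau-1}$. The engine of the proof is the residual-conservation identity $\sum_{n=1}^N f_n^{(t)} = \sum_{\tau=t}^\infty \delta^{\tau-1} = \delta^{t-1}\Sigma$ (every unit of slot value not yet spent is still owed to exactly one sensor), proved by an easy induction on $t$. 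Combining it with the pigeonhole bound $\max_n f_n^{(t)} \ge \frac1N\sum_n f_n^{(t)} = \frac{\delta^{t-1}\Sigma}{N}$ and the hypothesis $\delta\ge 1-1/N$, equivalently $\Sigma\ge N$, gives $f_{n^*}^{(t)} \ge \delta^{t-1}$, i.e. handing slot $t$ (of value $\delta^{t-1}$) to $n^*$ never drives its residual below zero. Hence each $f_n^{(t)}$ stays nonnegative and nonincreasing, while $\sum_n f_n^{(t)} = \delta^{t-1}\Sigma \to 0$, so $f_n^{(t)}\to 0$ for every $n$; that is exactly the statement that the realized weighted sum rate of sensor $n$ equals $r_n$. (One minor point: a sensor with $r_n=0$ has $f_n^{(t)}\equiv 0$ and is never selected, because the maximal residual is strictly positive in every slot.) Thus the entire simplex is achievable.

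For the ``only if'' direction I would argue the contrapositive. If $\delta < 1-1/N$, then $\Sigma = \frac{1}{1-\delta} < N$, so the uniform vector $r_n = \Sigma/N < 1$ lies on the simplex. But under \emph{any} allocation slot $1$, of weight $\delta^0 = 1$, is assigned to some sensor $m$, whence its realized rate is at least $1 > \Sigma/N$. Therefore this uniform vector is not achievable, and $\mathcal{B}(\{\delta,\dots,\delta\};\infty)$ is strictly smaller than the full simplex, completing the equivalence.

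The genuinely delicate step is the sufficiency argument, and specifically the no-overshoot claim: it is where the threshold $\delta = 1-1/N$ is forced, and it must be checked that the pigeonhole lower bound on the \emph{maximum} residual dominates the value $\delta^{t-1}$ of the slot being handed out in \emph{every} round, not merely on average. Once that holds, exact (not just asymptotic) realizability of $\mathbf{r}$ follows from monotonicity of the residuals together with the vanishing of their sum. Handling tie-breaking in the $\arg\max$ and targets on the boundary of the simplex is routine and I would only remark on it briefly.
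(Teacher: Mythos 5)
Your argument is correct and is essentially the paper's own proof in different clothing: your residuals $f_n^{(t)}$ are exactly $\delta^{t-1}$ times the paper's continuation rates $r_n(t)$, your conservation identity is the paper's observation that $\sum_n v_n(t+1)=1$ is preserved under the decomposition, and your pigeonhole bound $\max_n f_n^{(t)}\ge \delta^{t-1}\Sigma/N$ is the paper's condition $\delta \ge 1-v_n(t)$ evaluated at the worst case $\bold{v}=(1/N,\dots,1/N)$. Your write-up is somewhat more explicit than the paper's on two points --- the limit argument showing exact realization (monotone nonnegative residuals with vanishing sum) and the necessity direction via the slot-$1$ counting bound --- but the route is the same.
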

\begin{proof}
Since the weights are $w_{n,t} = \delta^{t-1}$, based on (\ref{eq5}), we write the weighted sum rate  $r_n$ as $r_n = \sum\limits_{\tau = 1}^\infty \delta^{\tau-1}\textbf{1}(s(\tau) = n)$.
We also write $r_n = r_n(1)$, indicating that it is the weighted sum rate {\it calculated at timeslot 1}. In general we can define   as the weighted sum rate calculated at timeslot  . Similar to the Bellman equation in dynamic programming, we can decompose   into the current rate at time slot 1 and the continuation weighted sum rate from timeslot 2:
\begin{equation}
\begin{array}{ll}
r_n(1) &= \sum\limits_{\tau=1}^\infty \delta^{\tau-1} \textbf{1}(s(\tau) = n) \\
&= \textbf{1}(s(1) = n) + \delta\sum\limits_{\tau=2}^\infty \delta^{\tau-1} \textbf{1}(s(\tau)=n)\\
& = \textbf{1}(s(1) = n) + \delta r_n(2)
\end{array}
\end{equation}
Similarly, $r_n(2)$ can be further decomposed into the current rate at timeslot 2 and $r_n(3)$. We call the vector of weighted sum rates $\bold{r}$  in the set $\mathcal{B}(\{\delta,...,\delta\};\infty)$  a feasible vector of weighted sum rates $\bold{r}$. The main idea of the proof is to show that when $\delta \geq 1 - 1/N$, for any $t\geq 1$, any feasible vectors of weighted sum rates $\bold{r}(t) \in \mathcal{B}(\{\delta,...,\delta\};\infty)$ can be decomposed by a vector of current rates $[\textbf{1}(s(t)=1),...,\textbf{1}(s(t)=1)]^T$ and a feasible vector of continuation weighted sum rates $\bold{r}(t+1) \in \mathcal{B}(\{\delta,...,\delta\};\infty)$. As long as the above holds, we can pick any feasible $\bold{r}(1)$, and decompose it to determine the current vector of weighted sum rates at timeslot 1 (i.e. to determine which sensor transmit at timeslot 1). Since all $\bold{r}(1)$  is feasible, we can decompose it to determine the current vector of weighted sum rates at timeslot $t$ (i.e. to determine which sensor transmits at timeslot $t$). In this way, we can obtain the slot allocation policy that yields the vector of weighted sum rates $\bold{r} = \bold{r}(1)$.

For easier exposition of the proof, we normalize $r_n$  by $1-\delta$, yielding the normalized weighted sum rate $v_n$, which represents the weighted average rate of sensor $n$. Consequently, a vector of normalized weighted sum rates $\bold{v}$ is feasible if it is in $\mathcal{B}_v(\{\delta,...,\delta\};\infty) = \{\bold{v}:\sum\limits_{n=1}^N v_n = 1\}$. The decomposition of $\bold{v}(t)$ can be written as
\begin{equation}
\begin{array}{rl}
v_n(t) =& (1-\delta)\sum\limits_{\tau = t}^\infty \delta^{\tau-t} \textbf{1}(s(\tau) = n)\\
=&(1-\delta)\textbf{1}(s(t) = n) \\
&+ (1-\delta) \sum\limits_{\tau = t + 1}^\infty \delta\cdot\delta^{\tau - (t+1)} \textbf{1}(s(\tau) = n)\\
=&(1-\delta)\textbf{1}(s(t) = n) + \delta v_n(t+1)
\end{array}
\end{equation}

Hence, our goal is to show that when $\delta \geq 1 - 1/N$, for any $t\geq 1$ any feasible vector of weighted sum rates $\bold{v}(t) \in \mathcal{B}_v(\{\delta,...,\delta\};\infty)$  can be decomposed into a vector of current rates $[\textbf{1}(s(t)=1),...,\textbf{1}(s(t)=N)]^T$  and a feasible vector of continuation normalized weighted sum rates $\bold{v}(t+1) \in \mathcal{B}_v(\{\delta,...,\delta\};\infty)$.

Suppose time slot $t$ is allocated to sensor $n$. Then $\bold{v}(t)$ can be decomposed as follows,
\begin{equation}
\begin{array}{c}
v_n(t) = (1-\delta) + \delta v_n(t+1)\\
\forall m \neq n: v_m(t) = \delta v_n(t+1)
\end{array}
\end{equation}
The continuation normalized weighted sum rates at time $t+1$ can therefore be derived as $v_n(t+1) = \frac{v_n(t) -1 +\delta}{\delta}$, $v_m(t+1) = \frac{v_m(t)}{\delta}$.
It can be easily verified that $\sum\limits_{n=1}^N v_n(t+1) = 1$ is automatically satisfied. However, for $\bold{v}(t+1)$ to be feasible, we also need $\forall n: v_n(t+1) \geq 0$. This requires $\delta \geq 1 - v_n(t)$. Therefore, the minimum discount factor is $\underline{\delta} = \max\limits_{\bold{v} \in \mathcal{B}_v} \min\limits_{n} \{1 - v_n\}$.
This is achieved when $\bold{v} = \{1/N,...,1/N\}$. Therefore $\delta = 1 - 1/N$.	
\end{proof}

Theorem 1 states when the discount factor is larger enough, i.e. video transmission is not very delay-sensitive, the set of achievable weighted sum rate vectors can be maximized. In the other extreme case, when $\delta = 0$, only $N$ finite weighted sum rate allocation vectors can be achieved as we mentioned earlier.

The result of Theorem 1 is important for the resource allocation design problem since we need to understand what can possibly be achieved by the DARA approach (since the first step of the DARA approach needs to determine the target weighted sum rates $\bold{r}^*$). When $\delta \geq 1- 1/N$, we can simply obtain the optimal allocation $\bold{r}^*$ by solving the following optimization problem
\begin{equation}
\begin{array}{c}
\bold{r}^* = \arg\max\limits_{\bold{r}} W(Q_1(\bold{r}),...,Q_N(\bold{r}))\\
\textrm{subject to}~~~\sum\limits_{n=1}^N r_n = \frac{1}{1-\delta}
\end{array}
\end{equation}
This optimization problem is easy to solve when $W$ is a convex function in $(r_1,...,r_N)$. For example, if the objective is to maximize the minimum of the weighted sensors' utilities, i.e. $W=\min\limits_n \alpha_n \bar{q}_n h_n r_n$, then the solution can be obtained analytically as $r^*_n = \left[(1-\delta)\sum\limits_{i=1}^N \frac{\alpha_n \bar{q}_n h_n}{\alpha_i \bar{q}_i h_i}\right]^{-1}$. After the optimal allocation $\bold{r}^*$ is determined, we then can run the DARA algorithm to find the optimal resource allocation vector $\bold{s}$. When sensors have an identical discount factor, in each slot, finding the sensor $n^*$, $1\leq n^* \leq N$, that maximizes $f_n^\mu w_{n,t}^\nu(\sum\limits_{\tau=t+1}^T w_{n,\tau})^{-\gamma}$ is equivalent to finding the sensor with the maximum $f_n$. The following theorem proves that the DARA algorithm is able to achieve any weighted sum rate vector in the set $\mathcal{B}$ and hence, the optimal $\bold{s}^*$ can be determined.

\begin{theorem}
Suppose the weight distributions are exponentially-decreasing, sensors have an identical discount factor $\forall n:\delta_n = \delta$  and the RAB duration $T \to \infty$. For any target weighted sum rate vector $\bold{r}\in\mathcal{B}(\{\delta,...,\delta\};\infty)$ and any  $\delta\geq 1 - 1/N$, the resource allocation $\bold{s}$ generated by running the DARA algorithm achieves $\bold{r}$.
\end{theorem}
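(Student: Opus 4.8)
The plan is to run the DARA algorithm against the target $\bold{r}$ while tracking, at the start of each slot $t$, the \emph{residual} (continuation) weighted sum rate still owed to each sensor, and to show by induction that this residual vector never leaves $\mathcal{B}(\{\delta,\dots,\delta\};\infty)$. Using the notation of the proof of Theorem~1, write $r_n(t)=\sum_{\tau=t}^{\infty}\delta^{\tau-t}\textbf{1}(s(\tau)=n)$ for the weighted sum rate ``calculated at slot $t$'', with $r_n(1)=r_n$, and let $v_n(t)=(1-\delta)r_n(t)$ be its normalized version, so that $\bold{v}(t)$ is feasible iff $\sum_{n}v_n(t)=1$ and $v_n(t)\geq 0$. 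Unrolling the Bellman-type identity $r_n(t)=\textbf{1}(s(t)=n)+\delta r_n(t+1)$ gives, for every $t\geq 1$,
\begin{equation}
r_n=\sum\limits_{\tau=1}^{t-1}\delta^{\tau-1}\textbf{1}(s(\tau)=n)+\delta^{t-1}r_n(t).
\end{equation}
Hence the DARA index $f_n$ at slot $t$ — the target $r_n$ minus what sensor $n$ has already been allocated in slots $1,\dots,t-1$ — equals $\delta^{t-1}r_n(t)$. Since all sensors share the discount factor, the factors $w_{n,t}^{\nu}$ and $(\sum_{\tau=t+1}^{T}w_{n,\tau})^{-\gamma}$ are identical across $n$ at a fixed $t$, and $\delta^{t-1}$ is a common positive multiple; so, as already observed in the text, picking the sensor that maximizes the DARA metric is picking $n^{*}(t)=\arg\max_{n}r_n(t)=\arg\max_{n}v_n(t)$.

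Next I would carry out the induction on $t$. Base case: $\bold{v}(1)$ is feasible since $\bold{r}\in\mathcal{B}(\{\delta,\dots,\delta\};\infty)$ by hypothesis. Inductive step: assume $\bold{v}(t)$ is feasible and let $n^{*}=n^{*}(t)$ be the sensor DARA selects in slot $t$. Because $\sum_{n}v_n(t)=1$, the largest component obeys $v_{n^{*}}(t)\geq 1/N$, and $\delta\geq 1-1/N$ then yields $v_{n^{*}}(t)\geq 1-\delta$. This is exactly the condition isolated in the proof of Theorem~1 under which the one-step decomposition
\begin{equation}
v_{n^{*}}(t+1)=\frac{v_{n^{*}}(t)-1+\delta}{\delta},\qquad v_m(t+1)=\frac{v_m(t)}{\delta}\quad(m\neq n^{*})
\end{equation}
is nonnegative, and there it is also verified that $\sum_{n}v_n(t+1)=1$ automatically. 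Therefore $\bold{v}(t+1)$ is again feasible, closing the induction. The point is that only the \emph{maximizer} of $v_n(t)$ is needed to be a safe choice (matching the worst case $\bold{v}=(1/N,\dots,1/N)$ of Theorem~1), and DARA always makes that choice.

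Finally I would pass to the limit in the displayed identity. The residuals stay bounded, $0\leq r_n(t)\leq\sum_{m}r_m(t)=\frac{1}{1-\delta}$, and $\delta<1$ by definition, so $\delta^{t-1}r_n(t)\to 0$ as $t\to\infty$. Hence $r_n=\sum_{\tau=1}^{\infty}\delta^{\tau-1}\textbf{1}(s(\tau)=n)$, which by (\ref{eq5}) is precisely the weighted sum rate of the allocation $\bold{s}$ produced by DARA; that is, DARA achieves $\bold{r}$.

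I expect the only genuinely delicate point to be the identification $f_n=\delta^{t-1}r_n(t)$, which is what turns ``allocate to the sensor with the largest DARA index'' into ``allocate to the sensor with the largest normalized continuation rate'' and thereby plugs directly into the feasibility-preserving decomposition of Theorem~1; once that is in place the argument is the Theorem~1 decomposition driven along an explicit trajectory plus a vanishing-tail estimate. Minor bookkeeping — ties among maximizers, sensors with $r_n=0$ (which have $v_n(t)=0$ for all $t$ and are never served), and the degenerate case $N=1$ (trivial, with $\delta\geq 0$) — causes no difficulty, since any maximizer keeps $\bold{v}(t+1)$ feasible and $N\geq 2$ forces $\delta\geq 1-1/N>0$ so the updates are well defined.
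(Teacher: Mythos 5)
Your proposal is correct and follows essentially the same route as the paper: iterate the feasibility-preserving decomposition from Theorem 1 along the trajectory the algorithm generates, and let the tail $\delta^{t-1}r_n(t)\to 0$ to conclude the target is achieved. You additionally make explicit the one step the paper leaves implicit --- that the DARA index satisfies $f_n=\delta^{t-1}r_n(t)$, so the greedy maximizer always has $v_{n^{*}}(t)\geq 1/N\geq 1-\delta$ and is therefore a feasibility-preserving choice --- which fills in rather than changes the argument.
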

\begin{proof}
We have proved in Theorem 1 that, when  $\delta \geq 1- 1/N$, at any timeslot $t$, any feasible $\bold{r}\in\mathcal{B}(\{\delta,...,\delta\};\infty)$ can be decomposed into a vector of slot allocation $[\textbf{1}(s(t) = 1),...,\textbf{1}(s(t)=N)]^T$ and the continuation weighted sum rate $\bold{r}(t+1) \in \mathcal{B}(\{\delta,...,\delta\};\infty)$ as follows: (when $s(t) = n$)
\begin{equation}
r_n(t) = 1 + \delta r_n(t+1) ; \forall m\neq n: r_m(t) = \delta r_m(t+1)
\end{equation}
This decomposition determines which sensor to transmit at timeslot $t$. In the DARA algorithm, we start with setting the target weighted sum rate vector $\bold{r}=\bold{r}(1)$ at slot 1, which is then decomposed to determine which sensor to transmit in timeslot 1. Then we decompose the continuation weighted sum rate vector and determine which sensor should transmit in timeslot 2. By performing the decomposition in every timeslot of the RAB, we can determine which sensor should transmit in every timeslot. Based on the proof of Theorem 1, we can do this decomposition forever because every continuation weighted rate vector is feasible. Moreover, the weighted sum rate achieved as $T\to \infty$ will be $\bold{r}$.
\end{proof}

\vspace{-10pt}
\subsection{Finite Time Horizon}
In practice, the channel coherence time $T$ is not infinite. Hence, the RAB duration $T$ is finite. In the following, we characterize the performance gap of the proposed algorithm when $T$ is finite instead of infinite. Let $v^T_n = \frac{r^T_n}{\sum\limits_{t=1}^T \delta^{t-1}}$, $v^\infty_n = \frac{r^\infty_n}{\sum\limits_{t=1}^\infty \delta^{t-1}}$
be the normalized weighted sum rate allocation when the time horizon is finite and infinite, respectively. The following proposition derives an upper bound for the distance of the achieved weighted sum rate allocation $v^T_n$ (by running the algorithm for only $T$ slots) from the optimal normalized weighted sum rate $v^\infty_n$.

\begin{proposition}
$|v^T_n - v^\infty_n|\leq \delta^T$.
\end{proposition}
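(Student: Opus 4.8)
The plan is to reuse the Bellman-type decomposition of the (normalized) weighted sum rate established in the proofs of Theorems 1 and 2, and then to do bookkeeping on the telescoped recursion. Fix the infinite-horizon target vector $\bold{r}^\infty$ and let $v_n(t)$ denote the continuation normalized weighted sum rate at timeslot $t$ when the DARA algorithm is run toward that target (assuming, as in Theorems 1--2, $\delta \geq 1-1/N$ so that every continuation vector is feasible). Recall that $v_n(1) = (1-\delta)r^\infty_n = v^\infty_n$, that $v_n(t) = (1-\delta)\textbf{1}(s(t)=n) + \delta v_n(t+1)$ for every $t$, and that feasibility of each continuation vector gives $v_n(t)\in[0,1]$ with $\sum_m v_m(t)=1$. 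The slot assignments $s(1),\dots,s(T)$ generated in the first $T$ iterations are exactly the ones realizing $r^T_n$.

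First I would unroll the recursion $T$ times to obtain $v_n(1) = (1-\delta)\sum_{t=1}^T \delta^{t-1}\textbf{1}(s(t)=n) + \delta^T v_n(T+1) = (1-\delta)r^T_n + \delta^T v_n(T+1)$. Dividing $r^T_n$ by $\sum_{t=1}^T \delta^{t-1} = (1-\delta^T)/(1-\delta)$ turns this into $v^T_n = (1-\delta)r^T_n/(1-\delta^T) = (v_n(1) - \delta^T v_n(T+1))/(1-\delta^T)$. Subtracting $v^\infty_n = v_n(1)$, the factor $1-\delta^T$ in the numerator partially cancels the denominator and one is left with the identity $v^T_n - v^\infty_n = \delta^T\bigl(v_n(1) - v_n(T+1)\bigr)/(1-\delta^T)$.

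It then remains to bound $|v_n(1) - v_n(T+1)|$ by $1-\delta^T$. For this I would use two elementary facts: (i) the already-allocated mass $(1-\delta)r^T_n$ is nonnegative and at most $(1-\delta)\sum_{t=1}^T \delta^{t-1} = 1-\delta^T$, because $r^T_n \leq \sum_{t=1}^T\delta^{t-1}$; and (ii) $v_n(T+1)\in[0,1]$ by feasibility of the continuation vector. Writing $v_n(1) - v_n(T+1) = (1-\delta)r^T_n - (1-\delta^T)v_n(T+1)$, fact (i) puts the first term in $[0,\,1-\delta^T]$ and fact (ii) puts the second term in $[0,\,1-\delta^T]$, so the difference lies in $[-(1-\delta^T),\,1-\delta^T]$. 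Plugging into the identity above yields $|v^T_n - v^\infty_n| \leq \tfrac{\delta^T}{1-\delta^T}\,(1-\delta^T) = \delta^T$.

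The only step that requires any thought is the last one: the naive bounds $v_n(1)\leq 1$ and $v_n(T+1)\geq 0$ only give $|v_n(1)-v_n(T+1)|\leq 1$ and hence the weaker estimate $\delta^T/(1-\delta^T)$; obtaining the clean bound $\delta^T$ needs the observation that the mass $(1-\delta)r^T_n$ allocated in the first $T$ slots cannot exceed the total available weight $1-\delta^T$, used together with feasibility of $v_n(T+1)$. Everything else is the telescoping identity and routine algebra.
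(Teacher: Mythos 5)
Your proof is correct, and it reaches the bound by a somewhat different route than the paper. The paper works directly with the two ratio formulas $v^T_n=\sum_{t=1}^T\delta^{t-1}\mathbf{1}(s(t)=n)/\sum_{t=1}^T\delta^{t-1}$ and $v^\infty_n=\sum_{t=1}^\infty\delta^{t-1}\mathbf{1}(s(t)=n)/\sum_{t=1}^\infty\delta^{t-1}$ and sandwiches $v^\infty_n-v^T_n$ between $-\delta^T$ and $\delta^T$ via two separate one-sided estimates (enlarging the denominator of $v^T_n$ and bounding the tail for the upper bound; dropping the tail and using $\sum_{t=1}^T\delta^{t-1}\mathbf{1}(s(t)=n)\leq(1-\delta^T)/(1-\delta)$ for the lower bound). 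You instead telescope the Bellman decomposition from Theorems 1--2 to get the exact identity $v^T_n-v^\infty_n=\delta^T\bigl(v_n(1)-v_n(T+1)\bigr)/(1-\delta^T)$ and then bound the single quantity $|v_n(1)-v_n(T+1)|$ by $1-\delta^T$. The two arguments use the same elementary ingredients ($r^T_n\leq\sum_{t=1}^T\delta^{t-1}$ and the tail lying in $[0,\delta^T/(1-\delta)]$), but your organization is arguably cleaner: the exact error formula makes transparent where the factor $\delta^T$ comes from and shows the bound is essentially tight (attained when one of the two terms vanishes). One small remark: you invoke feasibility ($\delta\geq 1-1/N$) to justify $v_n(T+1)\in[0,1]$, but this hypothesis is not needed here --- for any slot allocation $s$, the normalized tail $v_n(T+1)=(1-\delta)\sum_{\tau=T+1}^{\infty}\delta^{\tau-(T+1)}\mathbf{1}(s(\tau)=n)$ automatically lies in $[0,1]$, which is consistent with the paper stating the proposition without that assumption. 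Both proofs share the same implicit premise that $r^T_n$ is the truncation of the same infinite-horizon allocation $s$ that realizes $r^\infty_n$.
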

\begin{proof}
We can write $v^T_n$ and $v^\infty_n$ as:
\begin{align}
v^T_n = \frac{\sum\limits_{t=1}^T \delta^{t-1} \textbf{1}(s(t)=n)}{\sum\limits_{t=1}^T \delta^{t-1}}~\textrm{and}~v^\infty_n = \frac{\sum\limits_{t=1}^\infty \delta^{t-1}\textbf{1}(s(t)=n)}{\sum\limits_{t=1}^\infty \delta^{t-1}}
\end{align}
Finally, since,
\begin{equation}
\begin{array}{cc}
v^T_n  \geq  (1-\delta)\sum\limits_{t=1}^T \delta^{t-1} \textbf{1}(s(t)=n)\\
v^\infty_n \leq (1-\delta)\sum\limits_{t=1}^T \delta^{t-1} \textbf{1}(s(t)=n) + \delta^T
\end{array}
\end{equation}
we thus have $v^\infty_n - v^T_n \leq \delta^T$.
Secondly, we have
\begin{equation}
\begin{array}{c}
v^\infty_n - v^T_n \geq \frac{\sum\limits_{t=1}^T \delta^{t-1}\textbf{1}(s(t)=n)}{\sum\limits_{t=1}^\infty \delta^{t-1}} - \frac{\sum\limits_{t=1}^T \delta^{t-1} \textbf{1}(s(t)=n)}{\sum\limits_{t=1}^T \delta^{t-1}} \\
= -(1-\delta) \frac{\delta^T}{1-\delta^T}\sum\limits_{t=1}^T \delta^{t-1} \textbf{1}(s(t) = n)
\end{array}
\end{equation}
Since $\sum\limits_{t=1}^T \delta^{t-1} \textbf{1}(s(t) = n) \leq \sum\limits_{t=1}^T \delta^{t-1} = \frac{1-\delta^T}{1-\delta}$, we have
\begin{equation}
\begin{array}{c}
v^\infty_n - v^T_n \geq -(1-\delta) \frac{\delta^T}{1-\delta^T}\sum\limits_{t=1}^T \delta^{t-1} \textbf{1}(s(t) = n) \\
\geq -(1-\delta) \frac{\delta^T}{1-\delta^T}\frac{1-\delta^T}{1-\delta}= -\delta^T
\end{array}
\end{equation}
In summary, we have $-\delta^T \leq v^\infty_n - v^T_n \leq \delta^T$, which means $|v^T_n - v^\infty_n|\leq \delta^T$.
\end{proof}
Proposition 1 proves that the performance gap by limiting the slot (i.e. time) horizon to be finite. Moreover, the gap can be made arbitrarily small by choosing $T$  large enough. Therefore, the DARA algorithm asymptotically achieves the optimal weighted sum rate vector determined in the first step of the DARA approach as the number of slots goes to infinity.

\vspace{-10pt}
\subsection{Resource Allocation for the General Case}
In this subsection, we study the resource allocation problem when sensors have heterogeneous discounting weights and when the time horizon is finite. It is analytically difficult to determine the exact set of achievable weighted sum rate vectors. Hence, we will instead use approximate sets to solve the optimization problem (\ref{eq7}).

We notice that every feasible allocation vector $\bold{r}$ is bounded by $\sum\limits_{t=1}^T\min\limits_n w_{n,t} \leq \sum\limits_{n=1}^N r_n \leq \sum\limits_{t=1}^T \max\limits_n w_{n,t}$. Hence, an approximate set of achievable weighted sum rate vectors can be $\tilde{\mathcal{B}}(\{\bold{w}_1,...,\bold{w}_N\};T) = \{\bold{r}:\sum\limits_{n=1}^N r_n = R\}$ with $\sum\limits_{t=1}^T\min\limits_n w_{n,t} \leq R\leq \sum\limits_{t=1}^T \max\limits_{n} w_{n,t}$. For fixed $R$, when the objective function $W$  is convex in $\{r_1,...,r_N\}$, the optimization problem (\ref{eq7}) is convex and it is thus solvable in polynomial time. For example, if the objective function is the minimum of the weighted sensors' utilities, namely  , then the solution can be obtained analytically as
$r^*_n = \frac{R}{\sum\limits_{i=1}^N\frac{\alpha_n \bar{q}_n h_n}{\alpha_i \bar{q}_i h_i}}$.
Once the approximate optimal weighted sum rate vector $\bold{r}^*$ is obtained, we can construct the slot allocation vector $\bold{s}$ using the DARA algorithm. The performance of using the approximation set and the DARA algorithm will be evaluated in the next section.

\vspace{-10pt}
\section{Experiments}
In this section, we evaluate the performance of the proposed resource allocation solution via: (i) numerical studies; (ii) experiments using the IETF 6tisch simulation of the IEEE 802.15.4e TSCH MAC \cite{Watteyne}\cite{Wang} instantiation of the IEEE 802.15.4e TSCH MAC and standard surveillance videos encoded with H.264/AVC under the assumption of 4, 6 and 10 visual sensors sharing the IEEE 802.15e TSCH timeslots. These were found to be representative cases under the bandwidth constraints of the physical layer of IoT-oriented standards \cite{Watteyne}\cite{Wang}\cite{IEEE 802.15.4}.

\vspace{-10pt}
\subsection{Benchmarks}
Three benchmark policies are used in our experiments:

\begin{itemize}
  \item (i) round-robin slot assignment (``Round-robin''), which is the most typical option in WPAN MAC designs \cite{Watteyne}\cite{Wang}\cite{Bachir}\cite{Pister};
  \item (ii) rate-proportional round-robin slot assignment (``R-Round-robin'') \cite{Dutta}, where the average bit-budget of each sensor within each RAB is used to allocate a proportional number of slots in a round-robin fashion (higher budget corresponds to more slots for a sensor);
  \item  (iii) rate/delay-proportional round-robin slot assignment (``RD-Round-robin'') \cite{Pradas}, where, beyond the bit-budget, the delay deadline of each sensor is also used to weight the slot assignment according to a heuristic rule, i.e., beyond rate, the delay deadline is taken into account in the proportional slot allocation.
\end{itemize}

Note that the optimal slot allocation ("Optimal"), i.e. the solution to (\ref{eq4}) that is computationally hard: since we can choose any one of the $N$ sensors in each one of the $T$ timeslots in a RAB, the total number of possible allocations is $N^T$. For example, even with 2 sensors and a RAB with 100 timeslots, we need to search amongst more than $10^{30}$ possible slot allocations.

\begin{table}
\caption{Comparison of different policies in increasing order of complexity.}\label{table4}
\vspace{-10pt}
\centerline{\includegraphics[scale = 0.8]{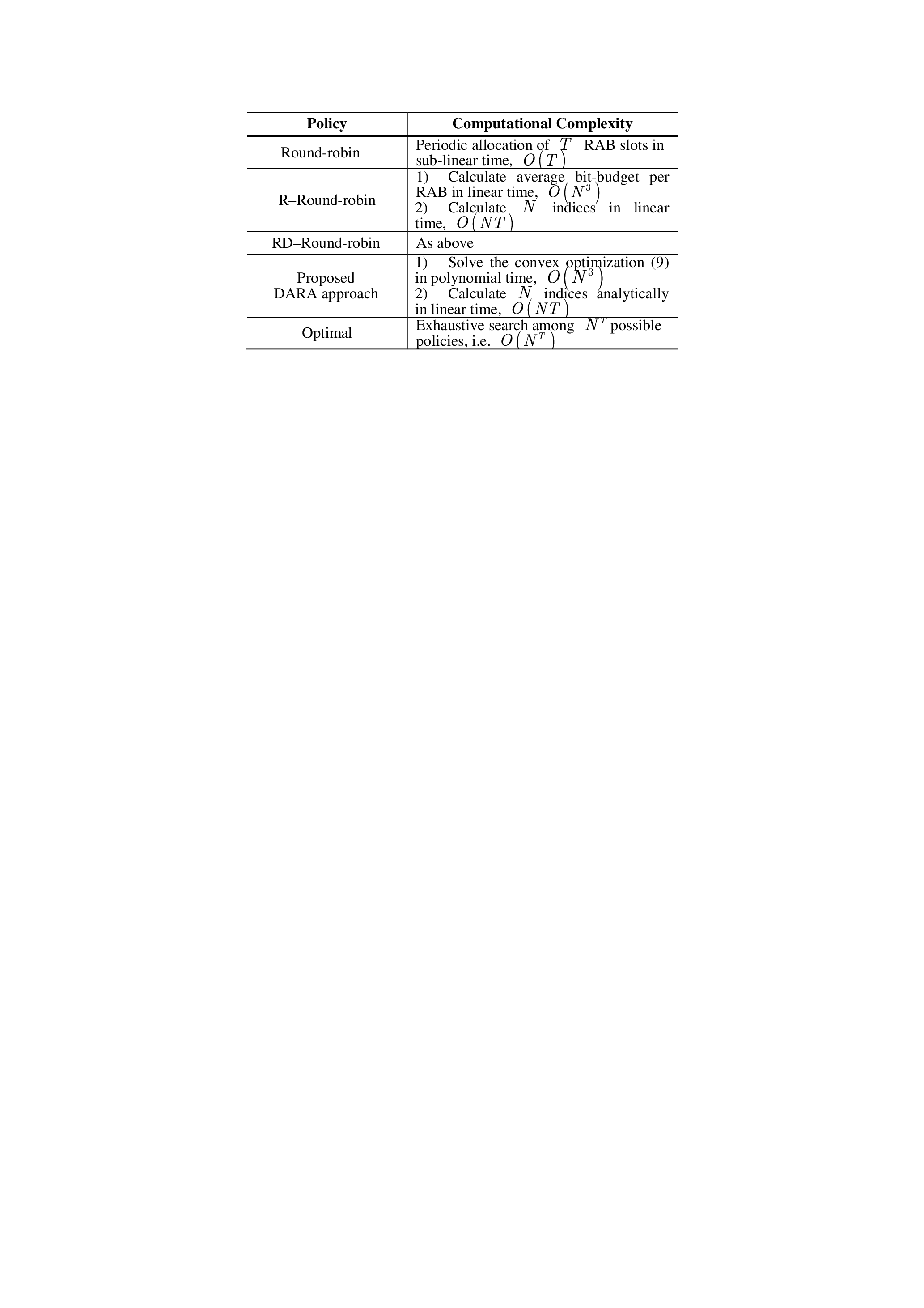}}
\vspace{-15pt}
\end{table}

Table IV lists the computational requirements of each policy, including the proposed DARA approach. The first steps of the R-Round-robin, RD-Round-Robin and the proposed DARA approach involve solving convex optimization problem which can be solved efficiently by many existing algorithms \cite{Boyd}. The second steps are more critical which determines the slot assignment among the sensors. Since the number of visual sensors is limited (i.e. in the vast majority of cases $N < 20$), the complexity of determining the weighted sum rate allocation (i.e. the first step of our DARA approach) is low. Moreover, since our DARA algorithm ensures that the complexity of determining the slot allocation (i.e. the second step of the DARA approach) is linear in $T$, the overall computation complexity is not a big concern. Note that performing the exhaustive search to determine the optimal slot allocation is extremely complex, i.e. $O(N^T)$, which prevents it from being implementable in practical systems.

\vspace{-10pt}
\subsection{Numerical Study}
In this subsection, we present numerical results with utility functions for each sensor that correspond to Theorems 1 and 2 and also use normalized scaling for the expected utilities. While this study does not directly map to real video sequences and an IEEE 802.15.4e network, it establishes the validity of the proposed DARA approach as a general non-stationary multi-user resource allocation method in a ``bias-free'' manner, i.e. regardless of the specific settings of the WPAN and video codec used. Thus, beyond the specific context of this paper, via such numerical studies one can extrapolate the usefulness of the DARA approach for other cases, e.g. in streaming of audio or other error-tolerant multi-sensor data volumes under delay constraints.

We assume that each sensor produces video traffic whose MAC frames (packets) can be characterized by a normal distribution \footnote{Similar numerical experiments have been derived for other distributions and other settings - we only illustrate this case as a representative one.} with mean $200$  and standard deviation $20$, i.e. $\forall n: h_n \sim \mathcal{N}(200,20)$. The expected video delivery utility per sensor is normalized to unity, i.e. $\forall n: \bar{q}_n = 1$. Under these conditions, we aim to maximize the minimum utility amongst all sensors, namely
\begin{equation}
\begin{array}{c}
W = \min\limits_{n} \alpha_n \bar{q}_n h_n r_n\\
\textrm{subject to}~~~\bold{r}\in\mathcal{B}(\{\bold{w}_1,...,\bold{w}_N\};T)
\end{array}\label{eq31}
\end{equation}
where $\forall n: \alpha_n = 1/N$. Unless otherwise stated, the DARA parameters are set to be  $\mu = \nu = \gamma = 1$.

\subsubsection{Identical Discount Factors}
We first investigate the scenarios where sensors have identical discount factors. In Figure 3, the utility achieved by our proposed DARA approach and the two benchmarks (R-Round-robin and RD-Round-robin coincide for identical discount factors) are shown for various numbers of sensors $N \in \{2,...,10\}$ and two discount factors $\delta_{\textrm{low}} = 0.99$  and $\delta_{\textrm{high}} = 0.995$. The resource allocation block size is set to $T=500$ slots and, for ease of illustration, the minimum utility of (\ref{eq31}) has been normalized to   for each algorithm.  As   increases, each sensor is able to obtain less timeslots and its utility is thus decreasing. Moreover, the value of later transmission opportunities is higher for larger values of the discount factor   and thus the utility is higher for such cases. The proposed DARA approach significantly outperforms the other benchmarks.

Since the DARA algorithm is adaptively changing the slot assignment in order to achieve the target objective function of (\ref{eq31}) under Algorithm 1, i.e. maximizing the minimum utility amongst all sensors, it is important to study whether the algorithm can actually achieve the target objective. Given this max-min objective, the target utilities for all sensors are determined to be the same for all sensors, i.e. 52.9, in the first step of the DARA approach. Table V illustrates the utilities of individual sensors achieved by the proposed DARA approach, Round-Robin and R-Round-Robin. The network size is set to be $N=6$, the discount factor is  $\delta = 0.99$ and the RAB is set to $T=500$. As we can see, the utilities achieved by running the DARA algorithm are very close to the target utilities. However, both Round-Robin and R-Round-Robin do not achieve the target utilities that maximize the minimum utility amongst all sensors.

\begin{figure}
\centering{\includegraphics[scale = 0.5]{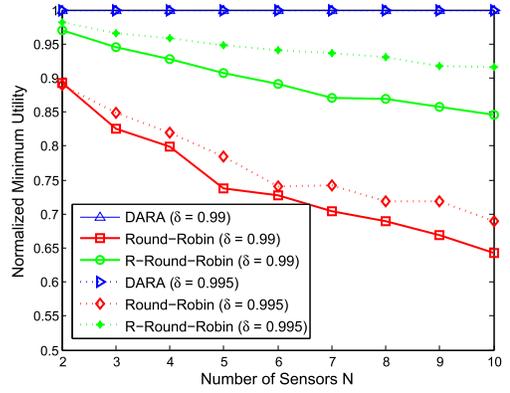}}
\caption{Performance comparison of various approaches under identical discount factors.}\label{homo}
\vspace{-20pt}
\end{figure}

\subsubsection{General Discounting Factors}
When the discounting factors are not exponentially decreasing or identical for all sensors, the proposed DARA approach may not achieve the optimal performance. Figure 4 illustrates the max-min utility of sensors (normalized to $[0,1]$) achieved by all approaches for two sets of discounting weights and   sensors in the network. In the first set of simulations (solid curves), the sensors' discounting factors are selected from $[0.990, 0.992]$ with equal intervals. In the second set of simulations (dashed curves), the sensors' discounting factors are selected from $[0.995,0.997]$ with equal intervals. This set of simulations serve as a counterpart for the case where sensors have identical discount factors illustrated in Figure 3. When solving (\ref{eq7}), we use $R = \sum\limits_{t=1}^T \min\limits_n w_{n,t}$. DARA again significantly outperforms Round-robin, R-Round-robin, and R-D-Round-robin. Importantly, the max-min utilities achieved by DARA are similar to those in the case of identical discount factors. The latter fact indicates that the deviation from optimality does not bring substantial penalties for DARA.

\begin{figure}
\centering{\includegraphics[scale = 0.5]{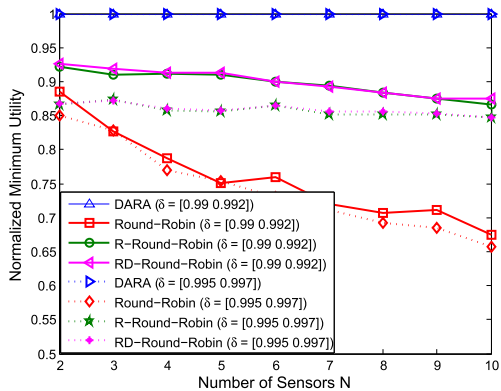}}
\caption{Performance comparison of various approaches under different discount factors.}\label{heter}
\vspace{-10pt}
\end{figure}

\begin{table}
\caption{Achieved individual utilities by DARA, Round-Robin and R-Round-Robin.}\label{Table5}
\vspace{-10pt}
\centering{\includegraphics[scale = 0.7]{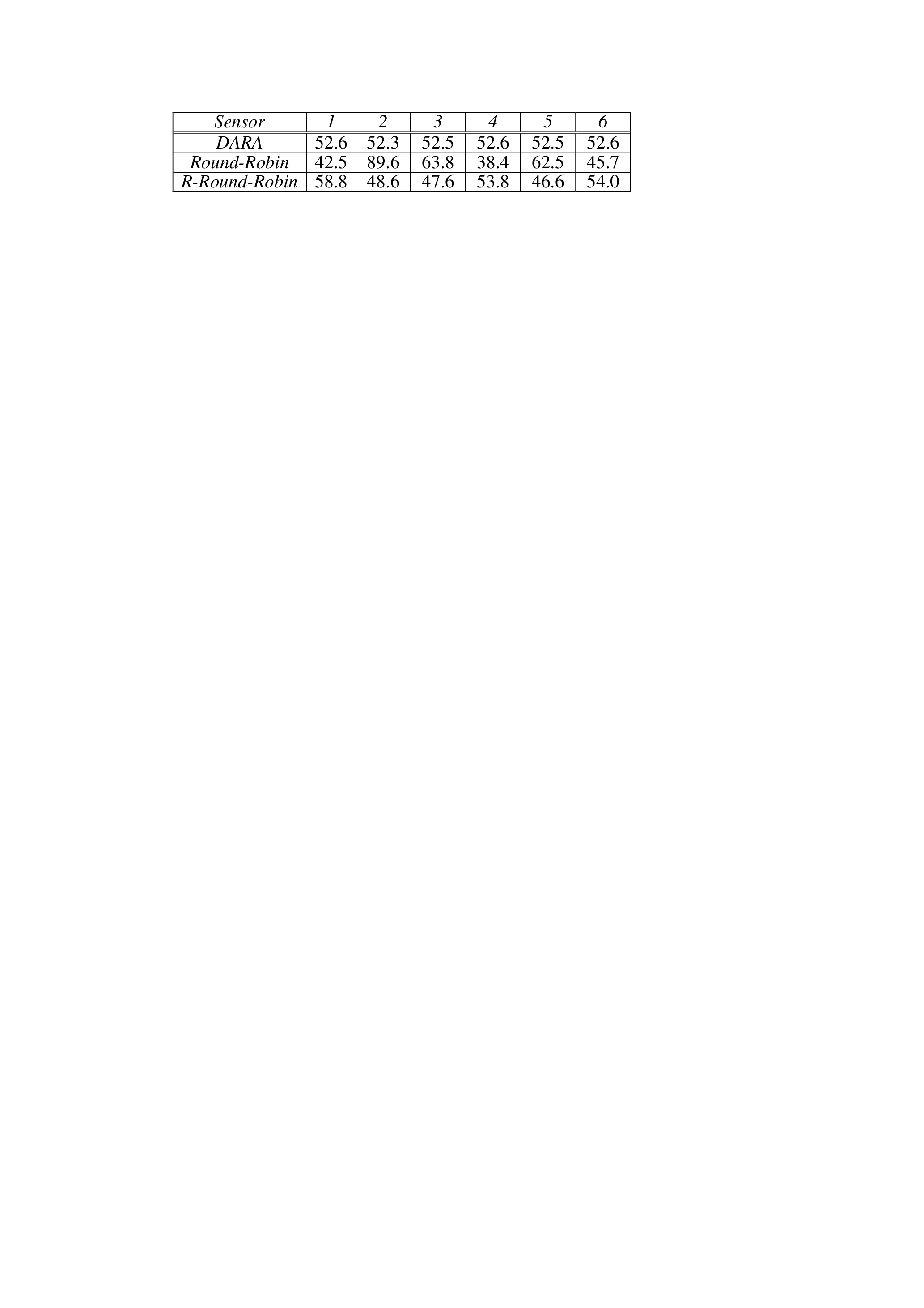}}
\vspace{-15pt}
\end{table}

\vspace{-10pt}
\subsection{Application in Video Transmission under IEEE 802.15.4e TSCH MAC}
To validate the proposed DARA approach under realistic conditions, we used the IETF 6tisch simulator (which is associated with the on-going Berkeley openWSN implementation for the IEEE 802.15.4e \cite{Watteyne}\cite{Wang}) and can be deployed on TelosB motes with the CC2420 transceiver. Several camera sensors can be coupled with this setup (e.g. MEMSIC's Lotus or IMB400, or EURESYS's Picolo V16-H.264) in order to capture and encode video content with the H.264/AVC codec and encapsulate it in MP4 or MKV format. However, in order to provide experiments under controlled conditions and with well-established and publicly-available content, we instead encoded surveillance videos from the http://pets2007.net/ website (PETS 2007 benchmark data) after converting them to: (i) 4 views of CIF resolution at 10 Hz; (ii) 10 views of QCIF resolution at 4 Hz. These can then be packetized and streamed by each openWSN-based TelosB mote under a pre-established transmission delay deadline for each video.

\subsubsection{Content Generation}
Out of a wide range of experiments performed, we present a representative case using dataset S1 from the PETS benchmark. This dataset comprises 2 min 20 sec of 720¡Á576 RGB 25 Hz video stemming from four different cameras of a public area. This led to four CIF-10Hz video feeds by filtering and downsampling and frame downconversion, which correspond to $N=4$ sensors. For $N=6$ and $N=10$ sensors, the four feeds where further downscaled to QCIF-4Hz, and two and six additional video feeds where created by cropping a QCIF-4Hz section of the original videos at spatial positions $(100,100)$ and $(250,250)$, respectively. By using H.264/AVC encoding in low-delay/quantization-stepsize mode, average video bitrates between 25-50 kbps where achieved per visual sensor for the CIF-10Hz case, while 4-13kbps where achieved for the QCIF-4Hz case.

\subsubsection{System Description}
These $N = \{4,6,10\}$ views were streamed via the 6tisch simulation of the TSCH MAC of IEEE 802.15.4e. For our experiments, each bitstream was encapsulated with the MKV container, as: (i) we have already developed low-delay open-source streaming mechanisms in our prior work \cite{Tsakos} that derive the hinting description of the content in real time and are tolerant to a very wide range of packet loss rates; (ii) our prior work \cite{Tsakos} can handle the streaming service beyond the WPAN within the IoT framework of Figure 1, thus allowing for the provisioning of an end-to-end IoT-based multi-camera video streaming solution. The utilized hinting description contains the transmission deadlines for each video frame (and consequently for each MAC timeslot) as well as its size and frame type ($\{I, P, B\}$). Thus, each sensor only needs to generate its weight distributions based on the hinting description of its recent content and communicate them, along with the other three parameters $\alpha_n$, $\bar{q}_n$, $h_n$,  to the LPBR once every 12 slotframes. The LPBR solves the optimization problem based on the parameters  $\alpha_n$, $\bar{q}_n$, $h_n$  to obtain the optimal weighted sum rate vector, and then derives the slot allocation via Algorithm 1, which is communicated to all sensors in downlink mode in order to be used for the subsequent 12 slotframes.

\subsubsection{Experimental Settings and Video Quality Characterization}
The utilized settings are representative to a video surveillance or monitoring application where the WPAN-based monitoring could allow for the content to be streamed over an IoT-enabled application to remote users having a variety of devices connected via a variety of networks. The parameters for our experiments were: $N = \{4, 6, 10\}$, $T_{\textrm{SF}}=1$ s, $T_{\textrm{slot}} = 7.7$ ms,  $h_n = 1$, $b_{\textrm{packet}} = 110$ bytes, $d_n\in [50, 600]$ ms (set randomly for each sensor), VideoLAN x264 encoder with configuration [--preset placebo --tune psnr --muxer mkv --keyint {40,16} --crf {44,35}], i.e. one I frame every 4 seconds and two crf values (corresponding to quantization stepsize). These parameters led to variable-bitrate encoding with average rates of 25-60 kbps for CIF-10Hz and 4-13 kbps for QCIF-4Hz, respectively. For the utilized content, application context, and encoding settings, we have verified via visual inspection that:
\begin{itemize}
  \item (i) mean PSNR between 20~25 dB corresponds to useful-quality video, i.e. blurriness and frame drops occur in several sections but main features and motion of objects or people is discernible;
  \item (ii) mean PSNR between 25~30 dB corresponds to good-quality video (low blurriness and almost no loss of motion or objects' characteristics);
  \item (iii) mean PSNR above 30 dB corresponds to high-quality video (video looks almost like the original albeit with very minor artifacts).
\end{itemize}

\subsubsection{Experiments}
Clearly, the results will vary depending on the delay deadline imposed on each sensor as well as the number of sensors in the WPAN. However, one other aspect that is important in the achievable performance is the difference between the weight distributions of the sensors. For instance, the DARA approach will have the maximum benefit from diversity in the weight distributions as it will assign the RAB slots according to the delay tolerance and the overall bit budget of each sensor.  The degree of diversity in these weight distributions depends on: (i) the transmission delay deadlines imposed; (ii) whether some (or all) sensors have their intra frames in synchronized transmissions. The prominence of intra frames is due to their large size in comparison to the remaining frames, which causes a large peak in the weight distribution at the particular RAB where their appear, as shown in Figure 2. We thus present three separate sets of results. Specifically, in Figure 5 and Figure 6 the average PSNR per sensor (luminance channel) is presented when imposing minimum diversity (i.e. worst-case scenario where all I-frames of all sensors are temporally aligned) and maximum diversity in the sensor transmissions (i.e. I frames of sensors are as temporally misaligned as possible), respectively. As expected, the results demonstrate that, under maximum I-frame alignment (worst case scenario), all algorithms achieve similar performance: they will only be able to accommodate only 1~5 sensors with mean PSNR above 20 dB), which means that the video of the majority of sensors is of unusable quality.

\begin{figure}
\centerline{\includegraphics[height = 1.6in, width = 3.4in]{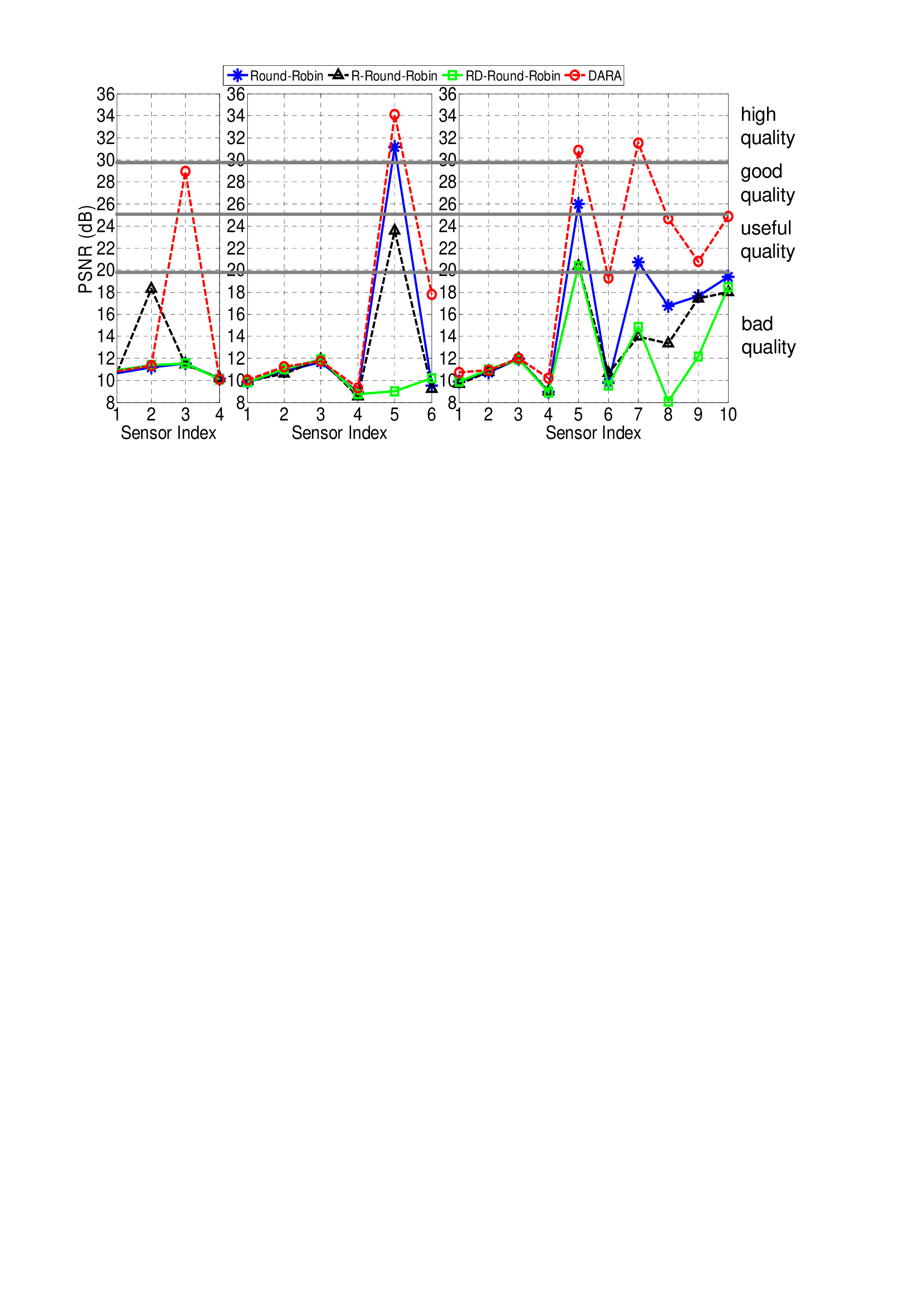}}
\vspace{-10pt}
\caption{Worst-case results: Mean PSNR (luminance channel, dB) per sensor for different methods (averaged over time and over several experiment repetitions) under maximum alignment. }\label{worstcase}
\vspace{-10pt}
\end{figure}

\begin{figure}
\centerline{\includegraphics[height = 1.6in, width = 3.4in]{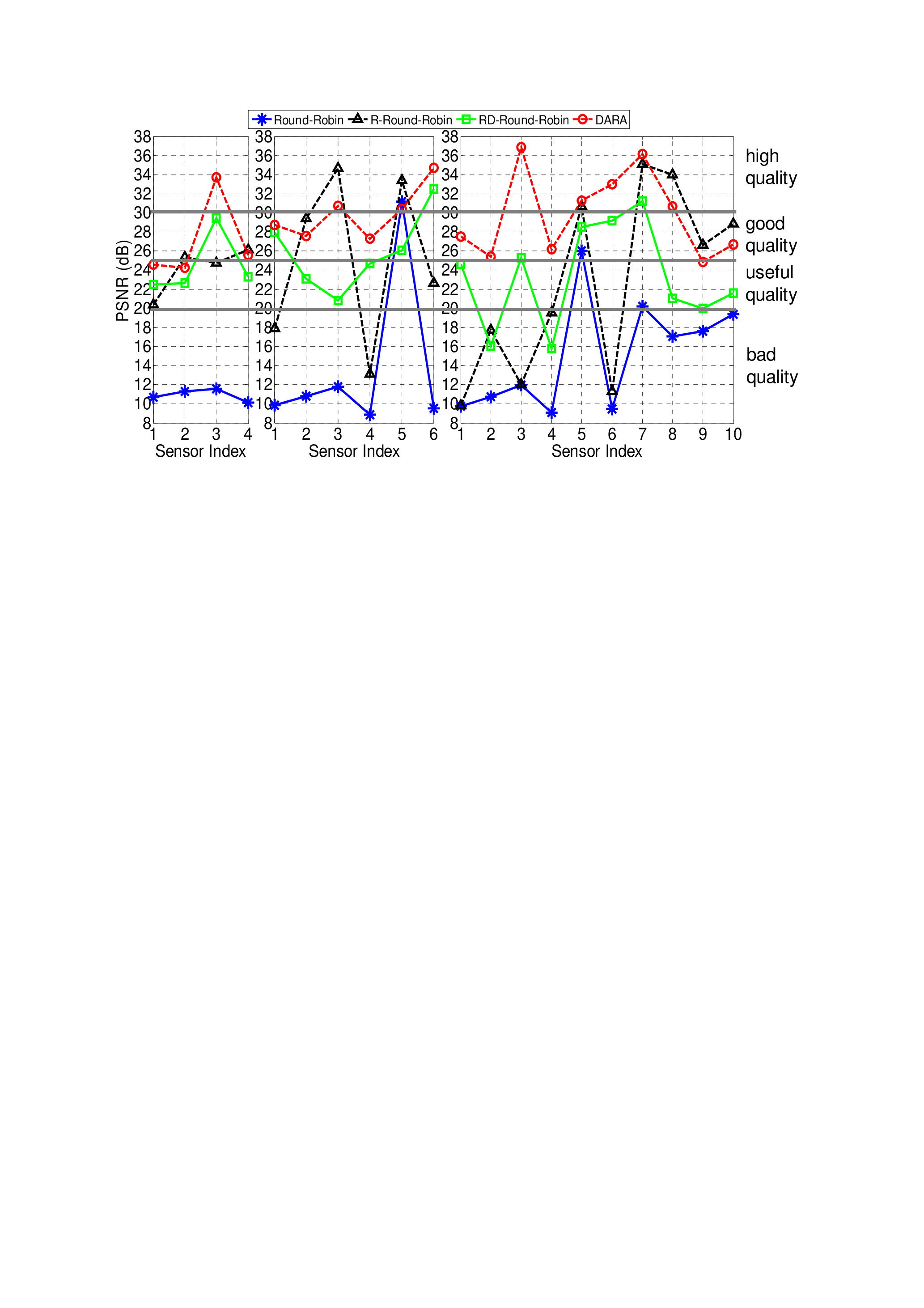}}
\vspace{-10pt}
\caption{Best-case results: Mean PSNR (luminance channel, dB) per sensor for different methods (averaged over time and over several experiment repetitions) under maximum misalignment. }\label{bestcase}
\vspace{-15pt}
\end{figure}

\begin{figure}
\centerline{\includegraphics[height = 1.6in, width = 3.4in]{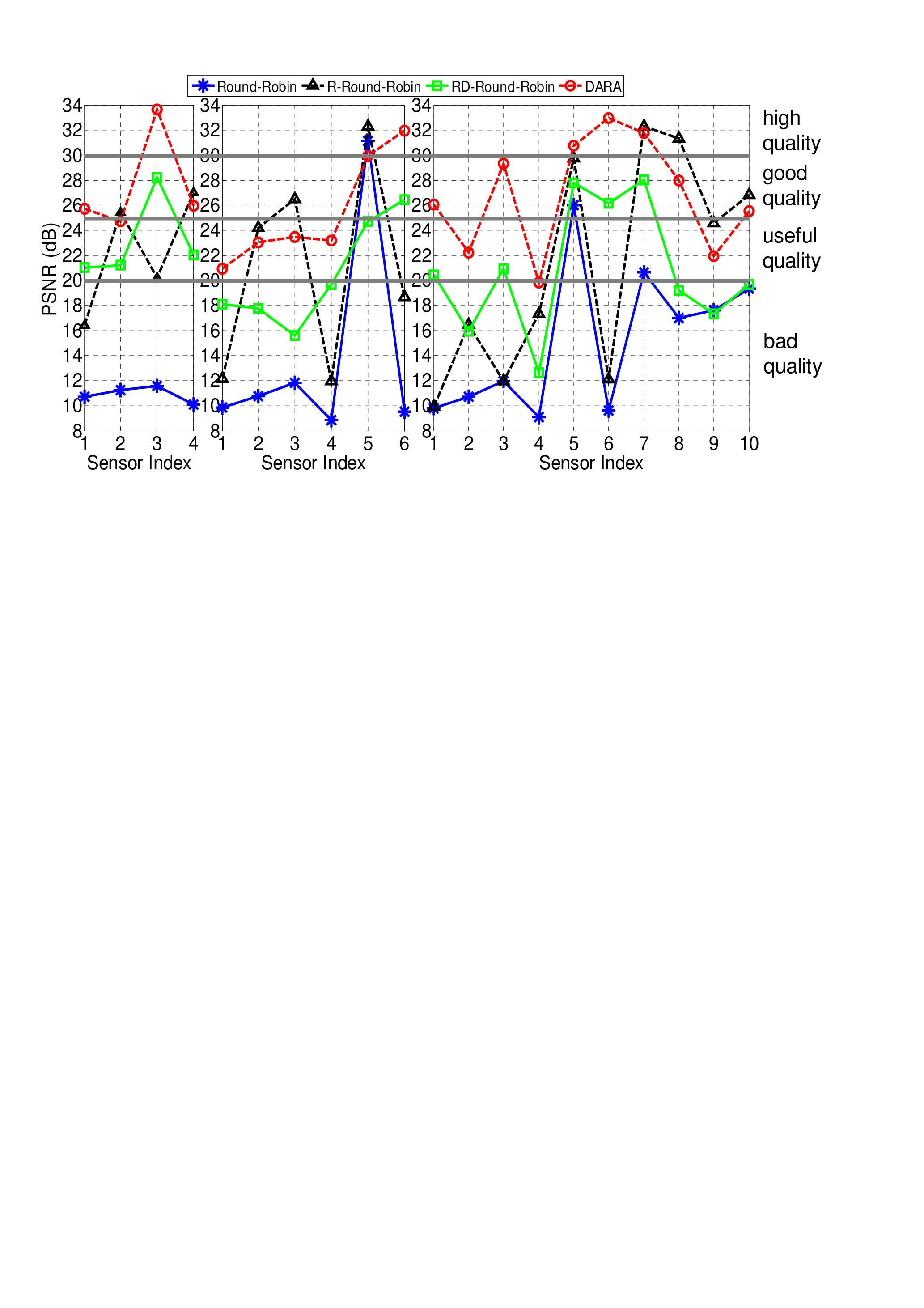}}
\vspace{-10pt}
\caption{Average-case results: Mean PSNR (luminance channel, dB) per sensor for different methods averaged over: time, experiment repetitions and all possible misalignments. }\label{averagecase}
\vspace{-15pt}
\end{figure}

Under maximum misalignment (i.e. best-case scenario of Figure 6), Round-robin continues to fail as its PSNR is almost always below 20 dB per sensor and R-Round-robin and RD-Round-robin achieve moderate success in transmitting useful-quality video (i.e. PSNR above 20 dB). On the other hand, the DARA algorithm not only succeeds in transmitting useful-quality video from all sensors, but also achieves mean PSNR above 25 dB (i.e. good quality) for all sensors. This indicates that, even under the best-case scenario, the DARA algorithm is the only method that can ensure the deadline-abiding transmission of all video bitstreams with useful quality.

In practice, the video bitstreams of different sensors have random alignment conditions. Thus, Figure 7 presents the mean PSNR averaged under random alignment conditions in the IEEE 802.15.4e TSCH MAC slotframe assignment. Similarly, as before, while all other algorithms have one or more sensors with average PSNR below 20 dB (i.e. unusable video quality), the DARA approach ensures that usable video quality is maintained for all sensors. Thus, DARA does not only lead to the highest average PSNR across all sensors, but it also ensures that the minimum PSNR achieved is enough to sustain usable video quality under the imposed delay constraints.

In Table VI, we report a summary of the average PSNR achieved by different policies in different scenarios. As mentioned previously, PSNR values above 20dB were deemed to correspond to usable visual quality for the spatial resolution and frame rate of the utilized video material.

\begin{table}
\caption{Average PSNR (in dB) of different policies.}\label{Table6}
\vspace{-10pt}
\centering{\includegraphics[scale = 0.8]{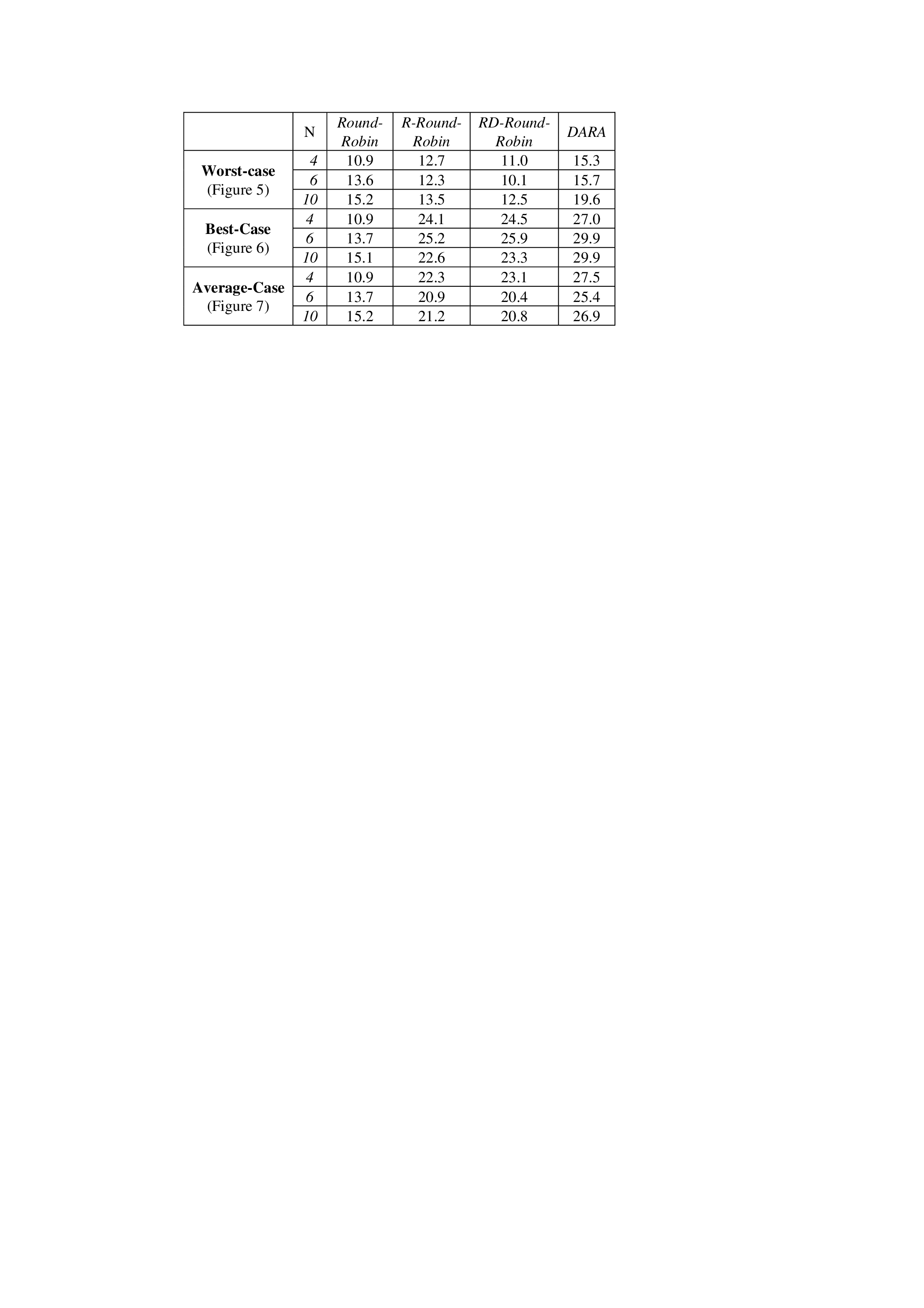}}
\vspace{-15pt}
\end{table}

\vspace{-10pt}
\section{Conclusions}
We present a new solution for slot allocation of multi-camera video streaming that is validated under IEEE 802.15.4e wireless personal area networks, which are expected to become a dominant deployment framework for video streaming under the Internet-of-Things (IoT) paradigm. Within the resource-constrained context of IoT applications, unlike existing works that require packet-level cross-layer information to perform resource allocation, we consider the more practical case where only limited statistical information of each sensor's packet deadlines is available. A unique characteristic of the proposed DARA approach for resource allocation is that it yields a non-stationary slot allocation policy by updating the indices in a manner that depends on the allocation of previous slots. This is in contrast with all existing slot allocation policies such as round-robin and its variants, which are stationary because the allocation of the current slot does not adapt to the allocation of previous slots. Our numerical studies and experiments with H.264/AVC encoded video and the IETF 6tisch simulator of the IEEE 802.15.4e TSCH indicate significant performance improvements against benchmark solutions. The present algorithm is constructed to operate in a particular setting but it can be applicable to many other resource allocation problems in many other settings.


%

%
%
%
%
%

\ifCLASSOPTIONcaptionsoff
  \newpage
\fi




\begin{thebibliography}{1}

\bibitem{Watteyne}
T. Watteyne et al. ``OpenWSN: a standards-based low-power wireless development environment.'' {\it Europ. Trans. on Emerg. Telecom. Technol.}, vol. 23, no. 5, pp. 480-493, to appear in 2013.

\bibitem{Wang}
Q. Wang, X. Vilajosana, and T. Watteyne, ``6tsch operation sublayer (6top),'' http://tools.ietf.org/html/draft-wang-6tsch-6top-00, 2013.

\bibitem{IEEE 802.15.4}
IEEE Standard for Local and Metropolitan Area Networks, Part 15.4: Low-Rate Wireless Personal Area Networks (LR-WPANs) Amendment 1: MAC sublayer, IEEE Computer Society, April. 16, 2012, online at: http://standards.ieee.org/getieee802/download/802.15.4e-2012.pdf

\bibitem{Shelby}
Z. Shelby and C. Bormann, 6LoWPAN: The wireless embedded Internet, Wiley Publishing, 2010, ISBN: 0470747994 9780470747995.



%

\bibitem{Stockhammer}
T. Stockhammer, ``Dynamic adaptive streaming over HTTP: standards and design principles,'' {\it ACM Proc. 2nd Annual Conf. on Multimedia Syst.}, 2011.


\bibitem{Tsakos}
I. Tsakos and Y. Andreopoulos, ``Multiport streaming of matroska (MKV) video over IP,'' Proc. London Commun. Sympos., 2010, online at: http://www.ee.ucl.ac.uk/$\sim$iandreop /TSAKOS\_LCS\_2010\_MultiportVideoStreaming.pdf.

%
%

\bibitem{Bachir}
A. Bachir, et al, ``MAC essentials for wireless sensor networks,'' {\it IEEE Comm. Surveys \& Tutorials}, vol. 12, no. 2, pp. 222-248, Feb. 2010.

\bibitem{Pister}
K. S. J. Pister and L. Doherty, ``TSMP: Time synchronized mesh protocol,'' {\it Proc. IASTED Int. Symp. On Distr. Sensor Networks}, DSN, Nov. 2008.

\bibitem{vdS2005}
M. van der Schaar and S. Shankar, ``Cross-layer wireless multimedia transmission: challenges, principles, and new paradigms,'' {\it IEEE Wireless Commun. Mag.}, vol. 12, no. 4, Aug. 2005.

\bibitem{Chou}
P. Chou, and Z. Miao, ``Rate-distortion optimized streaming of packetized media,'' {\it IEEE Trans. Multimedia}, vol. 8, no. 2, pp. 390-404, 2005.

\bibitem{vdS2007}
M. van der Schaar and D. Turaga, ``Cross-layer packetization and retransmission strategies for delay-sensitive wireless multimedia transmission,'' {\it IEEE Trans. Multimedia}, vol. 9, no. 1, pp. 185-197, Jan. 2007.



\bibitem{Liang}
Y. J. Liang and B. Girod, ``Network-adaptive low-latency video communication over best-effort networks,'' {\it IEEE Trans. on Circ. and Syst. for Video Technol.}, vol. 16, no. 1 pp. 72-81, Jan. 2006.


\bibitem{Huang}
J. Huang, Z. Li, M. Chiang and A. K. Katsaggelos, ``Joint source adaptation and resource allocation for multi-user wireless video streaming,'' {\it IEEE Trans. Circ. and Syst. for Video Technol.}, vol. 18, no. 5, pp. 582-595, May 2008.






\bibitem{Vukadinovic}
V. Vukadinovic, G. Karlsson, ``Video streaming performance under proportional fair scheduling,'' {\it IEEE J. on Select. Areas in Commun.}, vol. 28, no. 3, April 2010.


\bibitem{Fu2010}
F. Fu, M. van der Schaar, ``A systematic framework for dynamically optimizing multi-user wireless video transmission,'' {\it IEEE J.  Select. Areas in Commun.}, vol. 28, no. 3, pp. 308-320, Mar. 2010.

\bibitem{Katevenis}
M. Katevenis, S. Sidiropoulos, and C. Courcoubetis, ``Weighted round-robin cell multiplexing in a general purpose ATM switch chip,'' {\it IEEE. J. Select. Areas Commun.}, vol. 9, pp. 1265-1279, Oct. 1991.

\bibitem{Stiliadis}
D. Stiliadis, and A. Varma, ``Latency-rate servers: a general model for analysis of traffic scheduling algorithms,'' {\it IEEE/ACM Trans. Networking}, vol. 6, no. 5, pp. 611-624, May 1998.

\bibitem{vdS2006}
M. van der Schaar, Y. Andreopoulos and Z. Hu, ``Optimized scalable video streaming over IEEE802.11a/e HCCA wireless networks under delay constraints,'' {\it IEEE Trans. on Mobile Computing}, vol. 5, no. 6, June 2006.


\bibitem{Dutta}
P. Dutta, A. Seetharam, V. Arya, M. Chetlur, S. Kalyanaraman, J. Kurose, ``On managing quality of experience of multiple video streams in wireless networks,'' in {\it Proc. IEEE Infocom} 2012.

\bibitem{Pradas}
D. Pradas, M. A. Vazquez-Castro, ``NUM-based fair rate-delay balancing for layered video multicasting over adaptive satellite networks,'' {\it IEEE J. Sel. Areas in Commun.}, vol. 29, no. 5, May 2011.

\bibitem{Eryilmaz}
A. Eryilmaz and R. Srikant, ``Fair resource allocation in wireless networks using queue-length-based scheduling and congestion control,'' {\it IEEE INFOCOM} 2005.

\bibitem{Neely}
M. J. Neely, E. Modiano, and C.-P. Li, ``Fairness and optimal stochastic control for heterogeneous networks,'' {\it IEEE INFOCOM} 2005.

\bibitem{Chen}
L. Chen, S. Low, M. Chiang, J. Doyle, ``Cross-layer congestion control, routing and scheduling design in ad hoc wireless networks,'' {\it IEEE INFOCOM} 2006.

\bibitem{Joe-Wong}
C. Joe-Wong, S. Sen, T. Lan, M. Chiang, ``Multi-resource allocation: fairness-efficiency tradeoffs in a unifying framework,'' {\it IEEE INFOCOM} 2012.

\bibitem{Xiao}
Y. Xiao and M. van der Schaar, ``Dynamic spectrum sharing among repeatedly interacting selfish users with imperfect monitoring,'' {\it IEEE J. Sel. Areas Commun.}, vol. 30, no. 10, pp. 1890-1900, Nov. 2012.

\bibitem{Boyd}
S. P. Boyd and L. Vandenberghe. Convex optimization. Cambridge university press, 2004.


\end{thebibliography}
%

%

%
\begin{IEEEbiographynophoto}{Jie Xu}
received the B.S. and M.S. degrees in Electronic Engineering from Tsinghua University, Beijing, China, in 2008 and 2010, respectively. He is currently a Ph.D. student with the Electrical Engineering Department, UCLA. His primary research interests include wireless communications, game theory, learning and resource allocation in multi-agent networks.
\end{IEEEbiographynophoto}

\begin{IEEEbiographynophoto}{Yiannis Andrepoulos}
obtained the Electrical Engineering Diploma and an MSc degree from the University of Patras, Patras, Greece. He obtained the PhD in Applied Sciences from the University of Brussels (Belgium) in May 2005. During his post-doctoral work at the University of California Los Angeles (US) he performed research on cross-layer optimization of wireless media systems, video streaming, and theoretical aspects of rate-distortion-complexity modeling for multimedia stream processing systems. From Oct. 2006-Dec. 2007, he was Lecturer at the Electronic Engineering Department of Queen Mary University of London. Since Dec. 2007, he is Lecturer at the Electronic and Electrical Engineering Department of UCL. During 2002-2004, Dr. Andreopoulos made several decisive contributions to the ISO/IEC JTC1/SC29/WG11 (Moving Picture Experts Group ¨C MPEG) committee in the early exploration on scalable video coding. He is working in the fields of multimedia stream processing and coding, error-tolerant computing, signal processing \& transform design, and wireless protocols for low-end systems (e.g. sensor networks).
\end{IEEEbiographynophoto}

\begin{IEEEbiographynophoto}{Yuanzhang Xiao}
received the B.E. and M.E. degree in Electrical Engineering from Tsinghua University, Beijing, China, in 2006 and 2009, respectively. He is currently pursuing the Ph.D. degree in the Electrical Engineering Department at the University of California, Los Angeles. His research interests include game theory, optimization, communication networks, and network economics.
\end{IEEEbiographynophoto}

\begin{IEEEbiographynophoto}{Mihaela van der Schaar}
is Chancellor's Professor of Electrical Engineering at University of California, Los Angeles. She is the Editor in Chief of IEEE Transactions on Multimedia, a Distinguished Lecturer of the Communications Society for 2011-2012,  and was a member of the Editorial Board of the IEEE Journal on Selected Topics in Signal Processing. She received an NSF CAREER Award, Okawa Foundation Award (2006), IBM Faculty Award (2005, 2007, 2008), and several best paper awards. She has 33 granted US patents. She is also the founding director of the UCLA Center for Engineering Economics, Learning, and Networks.
\end{IEEEbiographynophoto}

%
%




\end{document}